\documentclass[12pt]{article} %showpacs 
\usepackage[paper=a4paper,margin=1.0in]{geometry}

\usepackage[english]{babel}
\usepackage{amsthm}
\usepackage{soul}
\usepackage{amssymb,amstext,amsmath,amsthm,mathtools}
\usepackage[dvips]{graphicx}
\usepackage{latexsym}
\usepackage{psfrag}
\usepackage{amsfonts}
\usepackage{bm} % bold math
\usepackage{color}
\usepackage{cite}
\usepackage{hyperref}
\hypersetup{colorlinks,allcolors=blue,citecolor=red,linktocpage=true}
\usepackage[utf8]{inputenc} 
\usepackage{enumitem}
\numberwithin{equation}{section}
\usepackage{authblk}

\usepackage[normalem]{ulem}
\newcommand{\normord}[1]{:\mathrel{#1}:}

\newcommand{\hide}[1]{}

\def\bra{\langle}
\def\ket{\rangle}

\def\VPD{{\mbox{\tiny VPD}}}
\def\Diff{{\mbox{\tiny Diff}}}
\def\DW{{\mbox{\tiny Diff-Weyl}}}

\newtheorem{theorem}{Theorem}[section]

\begin{document}

\title{General Actions of Extended Objects \\ and Volume-Preserving Diffeomorphism}

\author{
\vskip1em
{\large
        Pei-Ming Ho$^{a,b}$\footnote{pmho@ntu.edu.tw},~
        Hikaru Kawai$^{a,b,c}$\footnote{hikarukawai@phys.ntu.edu.tw},~
        and Henry Liao$^a$\footnote{henryliao.physics@gmail.com}
 }
\\ \vskip0.2em
    {
    \small ${}^a$\emph{Department of Physics and Center for Theoretical Physics,\\}
    \vskip-0.1em\small\emph{National Taiwan University, Taipei 106, Taiwan}
    }
    
    {
    \small ${}^b$\emph{Physics Division, National Center for Theoretical Sciences, Taipei 10617, Taiwan}
    }
\\    
    {\vskip0.35em
        \small ${}^c$\emph{Nambu Yoichiro Institute of Theoretical and Experimental Physics (NITEP),} \\
     \vskip-0.05em{\small \emph{Osaka Metropolitan University, Osaka 558-8585, Japan}}
    }
\\
}

\date{}

\maketitle

\begin{abstract}
We consider actions that are general functions of the worldsheet/worldvolume metric and the induced metric for extended objects embedded in spacetime as Riemannian manifolds, areal-metric manifolds, and volume-metric manifolds.
For strings on a Riemannian spacetime, we consider general actions respecting volume-preserving diffeomorphisms (VPD), general diffeomorphisms, and diffeomorphisms with Weyl symmetry, respectively. 
Well-known Schild, Nambu-Goto, and Polyakov actions are included as special cases.
We reach two main conclusions:
(1) When actions are functions of both the worldsheet metric and induced metrics, all nontrivial self-consistent actions are classically equivalent.
(2) As a physical constraint on the classical action, VPD symmetry is as strong as the full diffeomorphism symmetry.
The discussion is then extended to strings in spacetime manifolds equipped with the areal or volume metrics.
Then, we further consider higher-dimensional extended objects in spacetime defined with areal or volume metrics, and show the equivalence between the generalized Schild actions and the generalized Nambu-Goto action.
We prove a general theorem on VPD that explains this equivalence.
Incidentally, while only the areal metric is needed to define the string worldsheet action, we show that the Polyakov action with an areal-metric perturbation cannot describe critical strings without other interaction terms.
\end{abstract}

%%%%%%%%%%%%%%%%%%%%%%%%%%%%%%%%%%%%%%%%%%%%%%%%%%%%%%%%%%%%%%%%%%%%%%%%%%%%%%%%%%%%
\section{Introduction}\label{s:intro}

At the classical level, the worldsheet theory of a bosonic string can be equivalently described by the Nambu-Goto (NG) action, the Schild (S) action, and the Polyakov (P) action as (with the overall coefficients omitted)
\begin{align}
    S_{NG} &\equiv \int d^2\xi \left(\sqrt{\gamma} \right),
    \label{SNG} \\
    S_{S} &\equiv \int d^2\xi \left( \frac{1}{4} \epsilon^{ac} \epsilon^{bd} \gamma_{ab} \gamma_{cd} \right) = \int d^2\xi \left(\frac{1}{2} \gamma\right), 
    \label{SS} \\
    S_{P} &\equiv \int d^2\xi \left( \frac{1}{2} \sqrt{h} h^{ab} \gamma_{ab} \right),
    \label{SP}
\end{align}
where the $\xi^a$ ($a, b, c, d = 0, 1$) are the worldsheet coordinates, and $\epsilon^{ab}$ is the two-dimensional Levi-Civita symbol.
The induced metric $\gamma_{ab}$ is defined by
\begin{align}
    \gamma_{ab} \equiv g_{\mu\nu}(X) \partial_a X^\mu \partial_b X^\nu,
    \label{eq:gamma}
\end{align} 
where $g_{\mu\nu}(X)$ is a generic spacetime metric, whose (absolute) determinant is denoted by $\gamma$, and $h_{ab}$ is the worldsheet metric with $h$ being (the absolute value of) its determinant.

The Nambu-Goto action~\eqref{SNG} is invariant under the worldsheet diffeomorphisms.
The Schild action is invariant only under the worldsheet volume-preserving diffeomorphisms.
The Polyakov action enjoys both the full worldsheet diffeomorphism and the Weyl symmetry that scales the worldsheet metric $h_{ab}$.
The conformal gauge allows us to define the quantum theory of strings.
The classical equivalence of all three actions can be easily shown~\cite{Polyakov:1981rd,the Schild:1976vq}.
This raises a natural question: {\em Are there other actions classically equivalent to these three?}

In this paper, we study the most general forms of the string worldsheet action for each choice of the worldsheet gauge symmetry mentioned above.
We discuss conditions under which these actions are classically equivalent.

The above actions are related to the area of the string worldsheet induced from the background (pseudo-)Riemannian metric.
It is natural to generalize the notion of a (pseudo-)Riemannian metric to that of an areal metric.
We shall also study general forms of the string worldsheet actions with this extension and examine their equivalence.

Let us briefly recall the notion of an areal metric.
As the relativistic action of a point particle is the mass times the worldline length, the Nambu-Goto action of a string is the tension times the worldsheet area.
On a (pseudo-)Riemannian manifold, we define length using a metric $g_{\mu\nu}$ via the line element
\begin{align} \label{eq:line_metric}
    ds^2 = g_{\mu\nu} dX^\mu \otimes_{sym} dX^\nu,
\end{align}
where $dX^\mu$ is a one-form on the manifold and $\otimes_{sym}$ is the symmetrized tensor product.
The area can then be defined as a notion induced from that of the length.

Mathematically, one can define the areal element $dA$ directly as an inner product on the space of two-forms~\cite{Schuller:2005yt,Schuller:2005ru,Punzi:2006hy,Punzi:2006nx,Ho:2015cza,Borissova:2023yxs,Borissova:2024cpx}
as 
\begin{align} \label{eq:areal_metric_def}
    dA^2 = G_{\mu\nu\rho\lambda} \left(dX^{\mu}\wedge dX^{\nu}\right) \otimes_{sym} \left(dX^{\rho}\wedge dX^{\lambda}\right),
\end{align}
where $G_{\mu\nu\rho\lambda}$ is called the areal metric.
An areal metric may or may not induce the notion of length~\cite{Ho:2015cza,Borissova:2024cpx}.
However, one can define the string worldsheet area even in the absence of the notion of length.

In this work, we also take the first step toward the quantum theory of strings in an areal metric background.
We consider the ordinary Polyakov action perturbed by an areal-metric deformation.
Unfortunately, we find that such perturbations do not define critical strings.

Beyond discussion of strings, we shall also consider generic extended objects with $d$-dimensional worldvolume as generalizations of strings, in a background geometry defined with a volume metric, as a generalization of the areal metric.

In the discussions on the classical equivalence among different actions, we notice that VPD plays a special role.
We found an interesting general theorem that is very useful in these discussions.

The plan of this paper is as follows.
In Secs.~\ref{Riemann} and~\ref{sec:the Nambu-Goto_the_Schild_equiv}, we consider the most general worldsheet actions that respect the following three types of symmetries: (1) volume-preserving diffeomorphisms (VPD), (2) general diffeomorphisms (Diff), and (3) diffeomorphism with Weyl symmetry (Diff-Weyl).
We show that they are all classically equivalent.
They can be viewed, respectively, as generalizations of (1) the Schild action~\eqref{SS}, (2) the Nambu-Goto action~\eqref{SNG}, and (3) the Polyakov action~\eqref{SP}.
In Sec.~\ref{sec:algebraic_areal_bosonic}, we replace the Riemannian metric by the areal metric in the Nambu-Goto, the Schild, and the Polyakov actions and prove their classical equivalence.
In Sec.~\ref{sec:derivation_of_general_lagrangians}, we consider the most general actions for these 3 types of symmetries in areal-metric backgrounds.
In Sec.~\ref{sec:anomalous_polyakov}, we study the quantum theory of the Polyakov action perturbed by the areal metric, which was proposed in~\cite{Borissova:2024cpx}.
We checked that this perturbation does not define a critical string.
It is therefore not clear whether string theories can be consistently defined in areal-metric backgrounds.
In Sec.~\ref{sec:higher_dim}, we provide some comments on a similar setup for higher-dimensional objects.
Specifically, we consider actions that are functions of the worldvolume metric and the induced metric on both Riemannian and volume-metric manifolds.
At the end of the section, we provide a theorem on VPD that can be applied to prove the classical equivalence between the Schild(-like) action and the Nambu-Goto action in arbitrary dimensions on both Riemannian and volume-metric manifolds.
The conclusion is given in Sec.~\ref{sec:conclusion}.

\section{Worldsheet Actions on Riemannian Manifolds}
\label{Riemann}

In this section, we aim to find the most general string worldsheet actions that depend only on the induced metric $\gamma_{ab}$ and the worldsheet metric $h_{ab}$.
That is, we assume that the worldsheet fields $X^{\mu}$ only enter the action through $\gamma_{ab}$~\eqref{eq:gamma}; the only other field that can appear in the action is the worldsheet metric $h_{ab}$, which plays the role of an auxiliary field.\footnote{Fermionic fields are ignored in this work.}
This assumption already covers all three types of string actions~\eqref{SNG} --~\eqref{SP}.

In Sec.~\ref{sec:general_worldsheet_actions}, we first discuss the basic idea to construct the general form of Lagrangians which are functions of $\gamma_{ab}$ and $h_{ab}$ without their derivatives. 
Then, we consider worldsheet Lagrangians that respect different types of symmetries: (1) Volume-Preserving Diffeomorphisms (VPD), (2) Diffeomorphism (Diff), and (3) Diffeomorphism with Weyl symmetry (Diff-Weyl).

After integrating out the auxiliary worldsheet metric $h_{ab}$, we find that the latter two classes of Lagrangians reduce to the Nambu-Goto Lagrangian, while Lagrangian with the VPD symmetry reduces to Lagrangians of the form
\begin{align}\label{eq:generalized_schlid_action_opening_2}
    \mathcal{L}_{GS}=F(\sqrt{\gamma}),
\end{align}
where $F$ is an arbitrary function, and we call it the generalized Schild Lagrangian.
The classical equivalence between the generalized Schild actions and the Nambu-Goto action will be shown later in Sec.~\ref{sec:the Nambu-Goto_the_Schild_equiv}.

\subsection{General Worldsheet Actions}\label{sec:general_worldsheet_actions}

Denoting the matrix form of $\gamma_{ab}$ as $\Gamma$ and the matrix form of $h_{ab}$ as $H$, invariance under VPD demands that the worldsheet Lagrangian is a function of the following invariants $\gamma = \det \Gamma$, $h = \det H$, $a_n = \text{tr}\left(\left(\Gamma H^{-1}\right)^n\right)$ ($n=\pm1,\pm2,\dots$).
Assuming that the Lagrangian does not contain derivatives of $\gamma_{ab}$ and $h_{ab}$, a Lagrangian invariant under the VPD symmetry is in general of the form $f(\gamma, h, a_1,\dots)$.

For invariants constructed from $\Gamma H^{-1}$, since $\Gamma H^{-1}$ is a $2\times 2$ matrix, it is characterized by two independent invariants (e.g. trace and determinant).
All other invariants $a_n$ can be identified with functions of them.
For example, take 
\begin{align}\label{eq:a1}
    a_1 &= \text{tr}\left(\Gamma H^{-1}\right) = \gamma_{ab} h^{ba},
    \\
    a_2 &= \text{tr}\left(\left(\Gamma H^{-1}\right)^2\right) = \gamma_{ab} h^{bc} \gamma_{cd} h^{da} \label{eq:a2}
\end{align} 
as the only independent variables for all invariants constructed from $\Gamma H^{-1}$. 
We have 
\begin{align}\label{eq:a1a2_l1l2}
    a_1 &= \lambda_1 + \lambda_2,
    \\
    a_2 &= (\lambda_1)^2 + (\lambda_2)^2,
\end{align}
where $\lambda_1$ and $\lambda_2$ are the two eigenvalues of $\Gamma H^{-1}$.
Then, $a_{-1} = \frac{1}{\lambda_1} + \frac{1}{\lambda_2}$ can be written in terms of $a_1$ and $a_2$ as $\frac{2 a_1}{a_2-(a_1)^2}$.

Once we have the characteristics of $\Gamma H^{-1}$, the determinants $\gamma$ and $h$ are no longer independent.
Indeed, we have
\begin{align}\label{gh-1}
\gamma h^{-1} = \lambda_1\lambda_2 = \frac{{a_1}^2-a_2}{2}.
\end{align}
Therefore, any VPD-invariant Lagrangian can be written as a function of three variables: $\gamma$, $a_1$, and $a_2$.

The Lagrangians with VPD, Diff, and Diff-Weyl symmetries are given, respectively, by
\begin{align}
    \mathcal{L}_{\VPD} &=  f\left( \sqrt{\gamma}, a_1, a_2 \right), \label{eq:lagrangian_vpd0} \\
    \mathcal{L}_{\Diff} &= \sqrt{\gamma} f\left( a_1, a_2 \right), \label{eq:lagrangian_diff} \\
    \mathcal{L}_{\DW} &= \sqrt{\gamma} \ f\left(\frac{a_2}{{a_1}^2} \right)  \label{eq:lagrangian_conf},
\end{align}
where the functions $f$ are generic functions of their arguments.
For example, the Schild action~\eqref{SS} is a special case of $\mathcal{L}_{\VPD}$, the Nambu-Goto action~\eqref{SNG} is a special case of $\mathcal{L}_{\Diff}$, and the Polyakov action~\eqref{SP} is a special case of $\mathcal{L}_{\DW}$.
The generalized Polyakov action defined by~\cite{Borissova:2024cpx}
\begin{align} \label{eq:Lagrangian_GP}
    \mathcal{L}_{GP} = F\left(\sqrt{h} \frac{1}{2} h^{ab} \gamma_{ab}\right)
    = F\left(\sqrt{h} \frac{a}{2}\right)
\end{align}
is a subclass of the Diff-Weyl-invariant actions~\eqref{eq:lagrangian_conf}.

In this and the next sections, we will prove the classical equivalence of these Lagrangians with the Nambu-Goto theory.
The strategy is to derive the equations of motion for the auxiliary field $h_{ab}$ in terms of $\gamma_{ab}$ and then substitute its solution back into the Lagrangian.

Using the definitions of $a_1$~\eqref{eq:a1} and $a_2$~\eqref{eq:a2}, we find
\begin{align}
    \frac{\partial a_1}{\partial h^{ij}} = \gamma_{ji}, \qquad \frac{\partial a_2}{\partial h^{ij}} = 2 \gamma_{ja} h^{ab} \gamma_{bi}.
\end{align}
For any Lagrangian $\mathcal{L}$ among~\eqref{eq:lagrangian_vpd0} --~\eqref{eq:lagrangian_conf}, the equation of motion for $h^{ij}$ is
\begin{equation}\label{eq:heom}
    \frac{\partial \mathcal{L}}{\partial h^{ij}} = \frac{\partial \mathcal{L}}{\partial a_1} \frac{\partial a_1}{\partial h^{ij}} + \frac{\partial \mathcal{L}}{\partial a_2} \frac{\partial a_2}{\partial h^{ij}}
    = \frac{\partial \mathcal{L}}{\partial a_1} \gamma_{ji} + 2 \frac{\partial \mathcal{L}}{\partial a_2} \gamma_{ja} h^{ab} \gamma_{bi} = 0.
\end{equation}
Multiplying both sides by $h_{kc}\gamma^{cj}$ yields 
\begin{align}\label{eq:heomn2}
    \frac{\partial \mathcal{L}}{\partial a_1} h_{ki} + 2 \frac{\partial \mathcal{L}}{\partial a_2} \gamma_{ki} = 0.
\end{align}
After dividing $\partial\mathcal{L}/\partial a_1$ on both sides, we find that the worldsheet metric $h_{ab}$ and the induced metric $\gamma_{ab}$ differ only by an overall factor $\phi$:
\begin{align}\label{eq:h_gamma}
    h_{ab} = \phi \gamma_{ab}
\end{align} 
with the proportionality factor given by
\begin{align}\label{eq:h_eom}
    \phi = \left.
    -2 \frac{\partial \mathcal{L}}{\partial a_2} \left( \frac{\partial \mathcal{L}}{\partial a_1}\right)^{-1} \right|_{a_1=2\phi^{-1}, a_2=2\phi^{-2}}.
\end{align}
Here, the values of $a_1$ and $a_2$ are determined in terms of $\phi$ from eqs. \eqref{eq:a1}, \eqref{eq:a2} and \eqref{eq:h_gamma} as
\begin{align}\label{eq:a1a2}
    a_1 = 2\phi^{-1}, \qquad a_2 = 2\phi^{-2}.
\end{align}
Thus, $\phi$ can be, in principle, solved from eq. \eqref{eq:h_eom}, and $h_{ab}$ is then determined by eq. \eqref{eq:h_gamma} in terms of $\gamma_{ab}$.
In the following, we proceed to discuss each case in~\eqref{eq:lagrangian_vpd0} --~\eqref{eq:lagrangian_conf} with the assumption that they can be solved with $\phi\neq 0$.
In Sec.~\ref{sec:ccexp}, we consider two exceptional cases where eq. \eqref{eq:h_eom} has no solution for $\phi$, as well as an example where a solution of $\phi$ exists.

\subsubsection{VPD-Invariant Actions}
\label{sec:VPDIA}

In the case of a VPD-invariant Lagrangian $\mathcal{L}_{\VPD}$~\eqref{eq:lagrangian_vpd0}, eq. \eqref{eq:h_eom} is equivalent to
\begin{align} \label{eq:phi_vpd}
    \phi = -2 \left. \frac{\partial f\left( \sqrt{\gamma}, a_1, a_2 \right)}{\partial a_2} \left( \frac{\partial f \left( \sqrt{\gamma}, a_1, a_2 \right)}{\partial a_1}\right)^{-1}\right\vert_{a_1=2 \phi^{-1},\ a_2=2 \phi^{-2}}
\end{align}
which may determine $\phi$ as a function of $\sqrt\gamma$. 
Substituting it back into eq. \eqref{eq:lagrangian_vpd0}, the Lagrangian becomes 
\begin{align} \label{eq:lagrangian_vpd}
    \mathcal{L}_{\VPD} = f\left(\sqrt{\gamma}, 2\phi^{-1}\left(\sqrt{\gamma}\right),  2\phi^{-2}\left(\sqrt{\gamma}\right)\right),
\end{align}
which is in the form of the generalized Schild Lagrangian $\mathcal{L}_{GS}$~\eqref{eq:generalized_schlid_action_opening_2}.
Therefore, generic VPD-invariant Lagrangians are equivalent to generalized Schild Lagrangians.

\subsubsection{Diffeomorphism-Invariant Actions}
\label{DIA}

As for the diffeomorphism-invariant Lagrangian  $\mathcal{L}_{\Diff}$~\eqref{eq:lagrangian_diff}, eq. \eqref{eq:h_eom} states that
\begin{align} \label{eq:phi_diffeo}
    \phi = -2 \left. \frac{\partial f\left( a_1, a_2 \right)}{\partial a_2} \left( \frac{\partial f \left( a_1, a_2 \right)}{\partial a_1}\right)^{-1}\right\vert_{a_1=2 \phi^{-1},\ a_2=2 \phi^{-2}}.
\end{align}
Unlike eq. \eqref{eq:phi_vpd} for the case of VPD symmetry, this equation fixes $\phi$ to a constant in the absence of $\sqrt{\gamma}$.
The resulting Lagrangian is therefore just the Nambu-Goto Lagrangian
\begin{align} \label{eq:lagrangian_diffeo}
    \mathcal{L}_{NG} = c \sqrt{\gamma}
\end{align}
with the constant $c \equiv f(a_1(\phi),a_2(\phi))$ and $\phi$ given by the constant solution of eq. \eqref{eq:phi_diffeo}.

\subsubsection{Diffeomorphism-Weyl-Invariant Actions}
\label{DWIA}

For Lagrangians $\mathcal{L}_{\DW}$~\eqref{eq:lagrangian_conf} invariant under Diff-Weyl, eq. \eqref{eq:h_eom} is trivially satisfied as follows
\begin{align}\label{eq:phi_conf1}
    \phi = \left.-2 \frac{\partial f\left( \frac{a_2}{{a_1}^2} \right)}{\partial a_2}\left( \frac{\partial f\left( \frac{a_2}{{a_1}^2} \right)}{\partial a_1} \right)^{-1} \right\vert_{a_1=2\phi^{-1},\ a_2=2\phi^{-2}}= \left.\frac{a_1}{a_2}\right\vert_{a_1=2\phi^{-1},\ a_2=2\phi^{-2}} = \phi.
\end{align}
That is, $\phi$ remains arbitrary.
This situation is the same as the Polyakov action, where the equation of motion for $h_{ab}$ only demands conformal equivalence between $h_{ab}$ and $\gamma_{ab}$ without fixing the proportionality factor $\phi$.

Although $\phi$ remains unfixed, the argument of the function $f$ in the Lagrangian is $\frac{a_2}{(a_1)^2} = \frac{1}{2}$ (after substituting $a_1=2\phi^{-1}$ and $a_2=2\phi^{-2}$).
Hence, the resulting Lagrangian is 
\begin{align}\label{eq:lagrangian_conf1}
    \mathcal{L}_{\DW} = f\left( \frac{1}{2} \right) \sqrt{\gamma},
\end{align}
which is again the Nambu-Goto Lagrangian.
Hence, both Diff-invariant Lagrangians $\mathcal{L}_{\Diff}$ and Diff-Weyl-invariant Lagrangians $\mathcal{L}_{\DW}$ are equivalent to the Nambu-Goto action.
In Sec.~\ref{sec:the Nambu-Goto_the_Schild_equiv}, we discuss the equivalence between eqs. \eqref{eq:generalized_schlid_action_opening_2} and~\eqref{eq:lagrangian_diffeo}.

\subsection{Examples}\label{sec:ccexp}

In this subsection, we explicitly solve the proportionality factor $\phi$ from eq. \eqref{eq:h_eom} for concrete examples of Lagrangians and illustrate two possibilities: (i) eq.~\eqref{eq:h_eom} admits a real, nonzero solution for $\phi$, and (ii) eq.~\eqref{eq:h_eom} is inconsistent and does not admit any solution for $\phi$.

\subsubsection{Polyakov $+$ Cosmological Constant term}

As the first example, we add a cosmological constant term $\lambda \sqrt{h}$ to the Polyakov Lagrangian:
\begin{align}
    \mathcal{L} = \frac{1}{2}\sqrt{h}h^{ab}\gamma_{ab} + \lambda \sqrt{h}.
\end{align}
Using eqs. \eqref{eq:a1} and~\eqref{gh-1}, we rewrite the Lagrangian in the form of eq. \eqref{eq:lagrangian_diff} with the function $f$ given by
\begin{align}
    f(a_1, a_2) = \frac{1}{\sqrt{2}}\frac{1}{\sqrt{{a_1}^2-a_2}}\left(a_1 + \frac{\lambda}{2}\right),
\end{align}
and~\eqref{eq:phi_diffeo} becomes
\begin{align}
    \phi = \frac{\phi^{-1}+2\lambda}{\phi^{-2}-2\lambda\phi^{-1}}.
\end{align}
This equation is equivalent to $\phi^{-1}-2\lambda = \phi^{-1} + 2\lambda$, which is self-inconsistent unless $\lambda = 0$.
Therefore, the cosmological constant deformation of the Polyakov action does not define a classical system with consistent solutions to its equations of motion.

\subsubsection{Polyakov $+$ $\text{(Polyakov)}^2$}

Next, we add a term quadratic in the Polyakov density:
\begin{align} \label{eq:Polyakov-squared}
    \mathcal{L} = \frac{1}{2}\sqrt{h}h^{ab}\gamma_{ab} + \frac{\mu}{4}\sqrt{h}\left( h^{ab}\gamma_{ab} \right)^2.
\end{align}
This Lagrangian can be rewritten in the form of eq. \eqref{eq:lagrangian_diff}, and the explicit form of the function $f$ is
\begin{align}
    f(a_1, a_2) = \frac{1}{\sqrt{2}}\frac{1}{\sqrt{{a_1}^2-a_2}} \left( a_1+\frac{\mu}{2} {a_1}^2 \right).
\end{align}
Equation~\eqref{eq:phi_diffeo} is then equivalent to
\begin{align}
    \phi^{-1}\mu = 0.
\end{align}
Assuming that $\mu \neq 0$, this implies that $\phi^{-1} = 0$.
Since this implies $h^{ab}\gamma_{ab}=0$, the Lagrangian becomes trivial for $\gamma_{ab}$.
At the same time, for finite $h_{ab}$, eq. \eqref{eq:h_gamma} implies $\gamma_{ab}=0$.
This suggests $X^\mu=\mathrm{const.}$ for all $\mu$, corresponding to a point-like string.
Hence, the Polyakov squared term by itself is also not considered a consistent modification to the Polyakov Lagrangian for a string.

\subsubsection{Polyakov $+$ Cosmological Constant term $+$ $\text{(Polyakov)}^2$}

Finally, we introduce the previous two deformations simultaneously:
\begin{align}
    \mathcal{L} = \frac{1}{2}\sqrt{h}h^{ab}\gamma_{ab} + \lambda \sqrt{h} + \frac{\mu}{4}\sqrt{h}\left( h^{ab}\gamma_{ab} \right)^2.
\end{align}
Rewriting the Lagrangian with $f$ in form of eq. \eqref{eq:lagrangian_diff}, we have
\begin{align}
    f(a_1, a_2) = \frac{1}{\sqrt{2}}\frac{1}{\sqrt{{a_1}^2-a_2}} \left( \frac{\lambda}{2} + a_1+\frac{\mu}{2} {a_1}^2 \right).
\end{align}
The constraint on $\phi$~\eqref{eq:phi_diffeo} becomes
\begin{align}
    -\frac{1}{2}\phi^2\lambda + \mu = 0.
\end{align}
Thus, for $\lambda\neq 0$ and $\mu/\lambda>0$ (so that $\phi^{2}>0$), eq. \eqref{eq:h_eom} admits real, nonzero solutions for $\phi$ ($\phi=\pm\sqrt{\frac{2\mu}{\lambda}}$), and the relation $h_{ab} = \phi\,\gamma_{ab}$ is well-defined. 
In this case, the deformation is consistent.

%%%%%%%%%%%%%%%%%%%%%%%%%%%%%%%%%%%%%%%%%%%%%%%%%%%%%%%%%%%%%%%%%%%%%%%%%%%%%%%%%%%%%%%%%%%%%%%%%%%%%%

\section{Equivalences of (Generalized) Schild and Nambu–Goto Actions}
\label{sec:the Nambu-Goto_the_Schild_equiv}

In Sec.~\ref{Riemann}, we classified general worldsheet actions built from the worldsheet metric $h_{ab}$ and the induced metric $\gamma_{ab}$, leading to the three symmetry classes \eqref{eq:lagrangian_vpd0} -- \eqref{eq:lagrangian_conf}, which are, respectively, invariant under the symmetries of VPD, Diff, and Diff-Weyl. 
In Secs.~\ref{DIA} and \ref{DWIA}, we have shown that diffeomorphism is sufficient to imply the equivalence with the Nambu-Goto action, regardless of whether there is the Weyl symmetry.
But actions with VPD symmetry are only shown to be equivalent to the generalized Schild action in Sec.~\ref{sec:VPDIA}.
In this section, we take the next step and show that the generalized Schild action is equivalent to the Nambu-Goto action, so that VPD-invariant actions are also equivalent to the Nambu-Goto action.
That is, VPD invariance already suffices to ensure equivalence with the Nambu-Goto action.

In Sec.~\ref{sec:equiv_the Schild_nambu_goto}, we establish the classical equivalence between the Schild action and the Nambu-Goto action for an arbitrary spacetime metric $g_{\mu\nu}$. 
We also determine the overall coefficient of the Schild action by demanding its stress-energy tensor to match the string tension specified by the Nambu-Goto action. 

In Sec.~\ref{sec:generalized_the Schild_ng_equiv}, we show that the generalized Schild action~\eqref{eq:generalized_schlid_action_opening_2} is equivalent to the Schild action, and thus it is also equivalent to the Nambu-Goto action.

\subsection{Equivalence of Schild and Nambu-Goto Actions}\label{sec:equiv_the Schild_nambu_goto}

We now prove the classical equivalence between the Schild and the Nambu-Goto theories with an arbitrary spacetime metric $g_{\mu\nu}$.
This is achieved by showing that their equations of motion are equivalent when we take a special gauge ($\sqrt{\gamma}= $ const.) for the Nambu-Goto theory.
Then, we match their stress-energy tensors to see how the tension is encoded in the Schild action.

\subsubsection*{Classical Equivalence}

We start from the Schild and the Nambu-Goto Lagrangians:
\begin{align}
    \mathcal{L}_{S}&=a\gamma, \\
    \mathcal{L}_{NG}&=T\sqrt\gamma, \label{LNG}
\end{align}
where $a$ is a real constant, $T$ is the string tension and $\gamma$ is (the absolute value of) the determinant of $\gamma_{ab}$ defined in~\eqref{eq:gamma}.
Their equations of motion with respect to $X^\mu$ are
\begin{align}
    0&=\epsilon^{ab}\partial_{a}\left(X^\nu\partial_b\sigma_{\mu\nu}\right),
    \label{eq:the Schild_eom}\\
    0&=\epsilon^{ab}\partial_{a}\left(\frac{1}{\sqrt\gamma}X^\nu\partial_b\sigma_{\mu\nu}\right),
    \label{eq:ng_eom}
\end{align}
respectively, where 
\begin{align}
\label{sigma-def}
\sigma^{\mu\nu}=\epsilon^{ab}\partial_a X^\mu\partial_b X^\nu.
\end{align} 
Here we have used 
\begin{align}\label{eq:gamma_sigma_square}
    \gamma=g_{\mu\rho}g_{\nu\sigma}\sigma^{\mu\nu}\sigma^{\rho\sigma}
\end{align}
and defined
\begin{align}
    \sigma_{\mu\nu}=g_{\mu\rho}g_{\nu\lambda}\sigma^{\rho\lambda}.
\end{align}
We shall denote $\gamma = \sigma^2$, as a shorthand for eq. \eqref{eq:gamma_sigma_square}.

These two equations of motion appear to be different since $\partial_a$ in eq. \eqref{eq:ng_eom} acts on $\frac{1}{\sqrt\gamma}$.
However, in the Nambu-Goto theory, since $\sqrt\gamma$ is a scalar density on the worldsheet, we can choose $\sqrt{\gamma}$ to be a worldsheet constant as a gauge-fixing condition for the diffeomorphism symmetry.
The residual gauge symmetry is VPD.

Hence, the equation of motion~\eqref{eq:ng_eom} of the Nambu-Goto theory together with the gauge-fixing condition $\sqrt{\gamma} = \mathrm{const.}$ implies the equation of motion~\eqref{eq:the Schild_eom} of the Schild theory.
To prove that the Schild theory is equivalent to the Nambu-Goto theory, we show that the gauge-fixing condition is implied by its equation of motion~\eqref{eq:the Schild_eom}.

We multiply $\partial_c X^\mu$ to eq. \eqref{eq:the Schild_eom} and get
\begin{equation}\label{eq:sigma_constant}
\begin{split}
        0
    &=\partial_cX^\mu\epsilon^{ab}\partial_a\left(X^\nu\partial_b\sigma_{\mu\nu}\right)\\
    &=\frac{1}{4}\epsilon^{ab}\epsilon_{ca}\sigma^{\mu\nu}\partial_b\sigma_{\mu\nu}\\
    &=-\frac{1}{8}\partial_c\gamma.
\end{split}
\end{equation}
In the second equality, we have used the fact that the indices $\mu,\nu$ in $\partial_cX^\mu\partial_aX^\nu$ is (totally) anti-symmetrized due to the anti-symmetrized tensor $\sigma_{\mu\nu}$, together with the identity
\begin{align}\label{eq:antisym_sum}
    \partial_cX^{\mu}\partial_aX^{\nu}-\partial_cX^{\nu}\partial_aX^{\mu}=\frac12  \epsilon_{ca}\sigma^{\mu\nu}.    
\end{align}

In the Schild theory, the relation $\sqrt\gamma = \mathrm{const.}$ is not a gauge-fixing condition, but rather it is dynamically fixed (by the equation of motion) everywhere on the same connected worldsheet once its initial condition is specified somewhere on the same worldsheet.
Thus, on any connected worldsheet, there is no physical difference between the Schild theory and the Nambu-Goto theory.

\subsubsection*{String Tension in Schild Action}

Next, we discuss how to define tension from the Schild action by matching the stress-energy tensors of the Schild action and the Nambu-Goto action.
To do so, we rewrite the actions so that they explicitly depend on $g_{\mu\nu}$ as follows.
\begin{align}
    S_{Schild} &= a\int d^2\xi \left( g^{\mu\rho}(x)g^{\nu\sigma}(x) \sigma_{\mu\nu}\sigma_{\rho\sigma} \right)\\
    S_{NG} &= T\int d^2\xi \ \sqrt{g^{\mu\rho}(x)g^{\nu\sigma}(x) \sigma_{\mu\nu}\sigma_{\rho\sigma}}.
\end{align}
Their stress-energy tensors are
\begin{equation}
\begin{split}
    T^{(Schild)}_{\alpha\beta}&=\frac{\delta S_{Schild}}{\delta g^{\alpha\beta}(y)}\\
    &=2a\int d^2\xi \left(\delta^{(D)}\left(x-y\right)g^{\nu\sigma}\sigma_{\alpha\nu}\sigma_{\beta\sigma}\right)\\
    T^{(NG)}_{\alpha\beta}&=\frac{\delta S_{NG}}{\delta g^{\alpha\beta}(y)}\\
    &=T\int d^2\xi \left(\frac{\delta^{(D)}\left(x-y\right)g^{\nu\sigma}\sigma_{\alpha\nu}\sigma_{\beta\sigma}}{\sqrt{\sigma^2}}\right)
\end{split}\label{eq:stress_energies}
\end{equation}
where $\delta^{(D)}(x-y)$ is the delta function in $D$-dimensional spacetime.
Equating the integrands in~\eqref{eq:stress_energies} yields
\begin{align}\label{eq:tension_relation}
    T=2a{\sqrt{\sigma^2}}.
\end{align}
Since both $a$ and $T$ are constants, this relation is only meaningful when $\sigma^2$ is constant, which is true on each connected worldsheet in the Schild theory as shown in eq. \eqref{eq:sigma_constant}. 
Then, the tension is given by the Schild parameter $a$ and the constant worldsheet measure $\sqrt{\sigma^2}$ through eq. \eqref{eq:tension_relation}.

\subsection{Equivalence between Generalized Schild and Nambu-Goto Actions}
\label{sec:generalized_the Schild_ng_equiv}

We have demonstrated in Sec.~\ref{sec:equiv_the Schild_nambu_goto} that the Schild action is equivalent to the Nambu-Goto action. 
Now we show that the generalized Schild action~\eqref{eq:generalized_schlid_action_opening_2} is equivalent to the Schild action, and thus it is equivalent to the Nambu-Goto action.

Starting from the generalized Schild action~\eqref{eq:generalized_schlid_action_opening_2}, we have the equation of motion for $X^\mu$
\begin{equation}\label{eq:general_the Schild_eom}
\begin{split}
    0
    &=\epsilon^{ab} \partial_bX^\nu \partial_a\left(F'\left(\sigma^2\right)\sigma_{\mu\nu} \right)\\
    &=\epsilon^{ab} \partial_bX^\nu \left(F''\left(\sigma^2\right)\left(\partial_a\sigma^2\right)\sigma_{\mu\nu}+F'\left(\sigma^2\right)\partial_a\sigma_{\mu\nu} \right).
\end{split}
\end{equation}
Multiplying it by $\partial_cX^\mu$ and using the identity~\eqref{eq:antisym_sum}, we obtain
\begin{align}\label{eq:sigma_constant_2}
    \partial_c\sigma^2\left(F'\left(\sigma^2\right)+2\sigma^2F''\left(\sigma^2\right)\right)=0.
\end{align}
For generic $F$, the term $F'\left(\sigma^2\right)+2\sigma^2F''\left(\sigma^2\right)$ does not vanish identically.
The equation above implies that 
\begin{align}\label{eq:s2=c}
\partial_c\sigma^2 = 0
\end{align}
identically on the worldsheet.
(See Appendix~\ref{sec:discrete_sigma} for a more detailed proof.)
Hence, $\sigma^2$ is a worldsheet constant.
For non-zero $F'(\sigma^2)$, the $F'(\sigma^2)$ in eq. \eqref{eq:general_the Schild_eom} can be moved outside the derivative.
It is then straightforward to see that the equation of motion~\eqref{eq:general_the Schild_eom} of the generalized Schild action matches the equation of motion \eqref{eq:the Schild_eom} for the Schild action.

Since we have shown that the Schild theory is equivalent to the Nambu-Goto theory in Sec.~\ref{sec:equiv_the Schild_nambu_goto}, the generalized Schild action~\eqref{eq:generalized_schlid_action_opening_2} is also equivalent to the Nambu-Goto action.\footnote{There are cases where the term $F'\left(\sigma^2\right)+2\sigma^2F''\left(\sigma^2\right)$ vanishes identically. However, this is only possible if $F\left(\sigma^2\right)=A\sqrt{\sigma^2}+B$, which is nothing but the Nambu-Goto action.}
Together with the result in Sec.~\ref{sec:VPDIA} that VPD-invariant actions are equivalent to the generalized Schild actions, we have now completed the proof that all the actions~\eqref{eq:lagrangian_vpd0} --~\eqref{eq:lagrangian_conf} are equivalent to the Nambu-Goto action~\eqref{LNG}.

%%%%%%%%%%%%%%%%%%%%%%%%%%%%%%%%%%%%%%%%%%%%%%%%%%%%%%%%%%%%%%%%%%%%%%%%%%%%%%%%%%%%%%%%%%%%%%%%%%%%
\section{The Nambu-Goto and Schild Actions on Areal-Metric Manifolds}
\label{sec:algebraic_areal_bosonic}

In previous sections, we established the classical equivalence among the Schild, the generalized Schild, and the Nambu-Goto actions on Riemannian target manifolds.
In this section, we extend this analysis to target 
manifolds equipped with an areal metric $G_{\mu\nu\rho\lambda}$.

Note that as we will show in Sec.~\ref{sec:derivation_of_general_lagrangians}, any nontrivial dependence on the auxiliary areal worldsheet degrees of freedom is forbidden by Weyl symmetry.
The only possible Polyakov-like action (those with both diffeomorphisms and Weyl symmetry) is the one discussed in Sec.~\ref{sec:anomalous_polyakov}.
Hence, we only include the Nambu-Goto and Schild actions in this section.

An areal metric, as a generalization of the ordinary metric, is defined as eq. \eqref{eq:areal_metric_def}:
\begin{align}
    dA^2 = G_{\mu\nu\rho\lambda} da^{\mu\nu} da^{\rho\lambda}
    \label{dA2}
\end{align}
where $dA$ is the infinitesimal area element, $G_{\mu\nu\rho\sigma}$ is the areal metric, and $da^{\mu\nu}$ is a 2-form on the manifold. 

From its definition, we see that $G_{\mu\nu\rho\sigma}$ must have the following symmetries:
\begin{align} \label{eq:areal-metric-symm}
    G_{\mu\nu\rho\lambda} = -G_{\nu\mu\rho\lambda} = -G_{\mu\nu\lambda\rho} = G_{\rho\lambda\mu\nu}.
\end{align}
In a $D$-dimensional spacetime, it has $\frac{ D\left(D-1\right)\left(D^2-D+2\right)}{8}$ degrees of freedom.
By comparison, an ordinary metric has $\frac{D(D+1)}{2}$ degrees of freedom.
An areal metric contains more degrees of freedom for $D > 3$.
It is also conventional to impose the cyclicity condition
\begin{align}
     G_{\mu\nu\rho\lambda} + G_{\mu\rho\lambda\nu} + G_{\mu\lambda\nu\rho}  = 0
\end{align}
so that it is an irreducible representation of the local frame group $SL(D, \mathbb{R})$~\cite{Schuller:2005yt}.\footnote{That is, the areal metric $G_{\mu\nu\lambda\rho}$ cannot be used to define a nontrivial product between 1-forms and 3-forms, or a function on the space of 4-forms.}
This condition reduces the number of free parameters in an areal metric to $\frac{D^2 (D^2 - 1)}{12}$ in $D$ dimensions.
It still has more degrees of freedom than an ordinary metric for $D > 3$.

Since this class of objects has more degrees of freedom than ordinary metrics when $D>3$, an areal-metric in target space ($D\geq 4$) in general cannot be captured by a Riemannian metric.
In the following, we demonstrate how the Nambu-Goto and Schild  actions can be generalized to those defined for areal-metric backgrounds; we also discuss a Polyakov-like auxiliary-field formulation.
We will also prove their classical equivalence.

In Sec.~\ref{sec:nambu-area}, we introduce the areal-metric analogue of the Nambu-Goto action and derive its equation of motion.
On areal-metric manifolds, we consider  the Schild action with VPD symmetry in Sec.~\ref{sec:gng=gs} and a diffeomorphism-invariant action with an auxiliary worldsheet field as the counterpart of the Polyakov action in Sec.~\ref{sec:areal_nambu_polyakov}.
We will prove their equivalence to the Nambu-Goto action on areal-metric manifolds.

In Sec.~\ref{sec:derivation_of_general_lagrangians}, we will discuss more general actions on areal metric manifolds.

\subsection{Nambu-Goto action on Areal-Metric Manifold}\label{sec:nambu-area}

Due to the geometrical meaning of the Nambu-Goto action, it is straightforward to define the analogue of the Nambu-Goto action for an areal-metric background to be given by the worldsheet area as
\begin{align}\label{eq:action_GNG}
    S^A_{NG} = \int d^2\xi \sqrt{
    \frac{1}{4} G_{\mu\nu\rho\lambda} \sigma^{\mu\nu}\sigma^{\rho\lambda}},
\end{align}
where $G_{\mu\nu\rho\lambda}$ is the areal metric, $\sigma^{\mu\nu} = \epsilon^{ab} \partial_a X^\mu \partial_b X^\nu$~\eqref{sigma-def}, and the Greek (Roman) letters are spacetime (worldsheet) indices.
This action reduces to the usual Nambu-Goto action on Riemannian spacetime when the areal metric agrees with
\begin{align}\label{eq:area_to_line}
G_{\mu\nu\rho\lambda} = g_{\mu\rho}g_{\nu\lambda} - g_{\mu\lambda}g_{\nu\rho}
\end{align}
for a given Riemannian metric $g_{\mu\nu}$.
We will refer to~\eqref{eq:action_GNG} as the Nambu-Goto action for the areal metric $G$.

\subsection{The Schild Action on Areal-Metric Manifolds}\label{sec:gng=gs}

Since the Schild Lagrangian~\eqref{SS} can be obtained by squaring the Nambu-Goto Lagrangian~\eqref{SNG}, we define the areal-metric generalization of the Schild action as
\begin{align}\label{eq:action_GS}
    S^A_{S} = \int d^2\xi \,
    \frac{1}{4} G_{\mu\nu\rho\lambda} \sigma^{\mu\nu}\sigma^{\rho\lambda} .
\end{align}
It respects the VPD symmetry.
We now show that $S^A_{S}$ is classically equivalent to the areal-metric Nambu-Goto action $S^A_{NG}$ in \eqref{eq:action_GNG}, for an arbitrary background areal metric $G_{\mu\nu\rho\lambda}$.
The proof follows essentially the same line as that for Riemannian manifolds.

The equation of motion for $X^\alpha$ obtained from $S^A_{S}$~\eqref{eq:action_GS} is
\begin{align}\label{eq:eom_GS}
    0=\frac{1}{4}\partial_\alpha G_{\mu\nu\rho\lambda} \sigma^{\mu\nu} \sigma^{\rho\lambda} - \partial_a \left( G_{\alpha\nu\rho\lambda} \epsilon^{ab} \partial_b X^\nu \sigma^{\rho\lambda} \right).
\end{align}
After multiplying $\partial_c X^\alpha$ on both sides of eq. \eqref{eq:eom_GS} and using eq. \eqref{eq:antisym_sum},
we obtain
\begin{align}\label{eq:areal_G_sigma_sigma}
    0 = \partial_c\left( \sigma_A^2 \right),
\end{align}
where
\begin{align}
    \sigma_A^2 \equiv \frac{1}{4} G_{\mu\nu\rho\lambda} \sigma^{\mu\nu} \sigma^{\rho\lambda}.
    \label{sigma2-def}
\end{align}
we get from \eqref{eq:areal_G_sigma_sigma} that
\begin{align}
    \sigma_A^2 = \mbox{constant}.
    \label{s2=const}
\end{align}
This is in complete analogy with the original Schild action~\eqref{SS} whose equation of motion implies that $\sigma^2=g_{\mu\rho}g_{\nu\lambda} \sigma^{\mu\nu} \sigma^{\rho\lambda}$ is a worldsheet constant.

For the case of $S^A_{NG}$~\eqref{eq:action_GNG}, the equation of motion for $X^\alpha$ is given by
\begin{align} \label{eq:eom_gng}
    0 = \frac{1}{2\sqrt{\sigma_A^2}}\partial_\alpha G_{\mu\nu\rho\lambda} \sigma^{\mu\nu} \sigma^{\rho\lambda} - \partial_a \left( \frac{1}{\sqrt{\sigma_A^2}} G_{\alpha\nu\rho\lambda} \epsilon^{ab} \partial_b X^\nu \sigma^{\rho\lambda} \right).
\end{align}
Since $\sqrt{\sigma_A^2}$ is a scalar density on the worldsheet, we can fix it to a constant via part of the diffeomorphisms, reducing the symmetry to VPD.
This allows us to factor out $1/\sqrt{\sigma_A^2}$ from $\partial_a$ in the second term.
After multiplying eq. \eqref{eq:eom_gng} by $\sqrt{\sigma_A^2}$, the equation becomes
\begin{align}
    0 = \frac{1}{2}\partial_\alpha G_{\mu\nu\rho\lambda} \sigma^{\mu\nu} \sigma^{\rho\lambda} - \partial_a \left( G_{\alpha\nu\rho\lambda} \epsilon^{ab} \partial_b X^\nu \sigma^{\rho\lambda} \right)
\end{align}
which is the same as eq. \eqref{eq:eom_GS}, the equation of motion for $X^\mu$ from the areal Schild action $S^A_{S}$.
Thus, we have proven the classical equivalence for any areal metric between the Schild action and the Nambu-Goto action.

\subsection{Action with Auxiliary Field on Areal-Metric Manifolds}
\label{sec:areal_nambu_polyakov}

We intend to construct an action analogous to the Polyakov action, which is an action involving not only the induced metric $\gamma_{ab}$ but also an auxiliary worldsheet metric $h_{ab}$.
While the spacetime metric is now promoted from $g_{\mu\nu}$ to $G_{\mu\nu\lambda\rho}$, the auxiliary worldsheet metric $h_{ab}$ should also be promoted to an areal metric $\mathcal{H}_{abcd}$.
However, in 2 dimensions, $\mathcal{H}_{abcd}$ has only one independent component $\mathcal{H}_{0101}$, which we simply denote as $\mathcal{H}$.
Using this auxiliary field $\mathcal{H}$, we can remove the square root in~\eqref{eq:action_GNG} with the full diffeomorphism preserved.
It turns out that it is impossible to enhance the diffeomorphism to have Weyl symmetry or scale symmetry.

Analogous to the einbein action for particles
\begin{align}
    S_{e}=\frac{1}{2}\int d\tau\left(e^{-1}\dot{X}^2+em^2\right),
\end{align}
we consider
\begin{align}
\label{eq:action_GP0}
S^A_{e} = \frac{1}{2} \int d^2\xi \, \sqrt{\mathcal{H}} \left[
\mathcal{H}^{-1} \left(\frac{1}{4} G_{\mu\nu\rho\lambda} \sigma^{\mu\nu} \sigma^{\rho\lambda}\right) + T^2\right].
\end{align}
Here, we regard $\sqrt{\mathcal{H}}$ as the analogue of the einbein $e$, $\frac{1}{4} G_{\mu\nu\rho\lambda} \sigma^{\mu\nu} \sigma^{\rho\lambda}$ as that of $\dot{X}^2$ and the tension $T$ as that of the mass $m$.
Since $\sqrt{\mathcal{H}}$ transforms as a scalar density under worldsheet diffeomorphism, this action is invariant under worldsheet diffeomorphism.

To demonstrate its equivalence to the Nambu-Goto action~\eqref{eq:action_GNG} for the areal metric $G_{\mu\nu\rho\lambda}$, we integrate out the worldsheet areal metric $\mathcal{H}$.
By varying the action~\eqref{eq:action_GP0} with respect to $\sqrt{\mathcal{H}}$, we get the equation of motion for $\mathcal{H}$:
\begin{align}
    0=\frac{-1}{{\left(\sqrt{\mathcal{H}}\right)}^2}\left(\frac{1}{4}G_{\mu\nu\rho\lambda}\sigma^{\mu\nu}\sigma^{\rho\lambda}\right)+T^2
\end{align}
which is solved by
\begin{align}
    \sqrt{\mathcal{H}}=\frac{1}{T}\sqrt{\frac{1}{4}G_{\mu\nu\rho\lambda}\sigma^{\mu\nu}\sigma^{\rho\lambda}}.
\end{align}
Substituting this back into eq. \eqref{eq:action_GP0} gives
\begin{align}
    S^A_{e} = T\int d^2\xi \, \sqrt{\frac{1}{4} G_{\mu\nu\rho\lambda} \sigma^{\mu\nu} \sigma^{\rho\lambda}},
\end{align}
which is exactly the Nambu-Goto action on areal-metric manifolds~\eqref{eq:action_GNG} with tension $T$.

%%%%%%%%%%%%%%%%%%%%%%%%%%%%%%%%%%%%%%%%%%%%%%%%%%%%%%%%%%%%%%%%%%%%%%%%%%%%%%%%%%%%%%%%%%%%%%%%%%%%
\section{General Worldsheet Actions on Areal-Metric Manifolds} \label{sec:derivation_of_general_lagrangians}

In this section, we extend the discussion in Secs.~\ref{Riemann} and~\ref{sec:the Nambu-Goto_the_Schild_equiv} from a Riemannian metric background to an areal-metric background.
We show that, in the areal-metric case, the most general VPD-invariant actions and diffeomorphism-invariant actions are both classically equivalent to the areal Nambu-Goto action~\eqref{eq:action_GNG}.

Let $G_{\mu\nu\rho\lambda}$ be the areal metric in spacetime, $\xi^a$ the worldsheet coordinates, and $X^\mu\left(\xi\right)$ the embedding functions. 
The induced areal metric on the worldsheet is
\begin{align}
    \gamma_{abcd} \equiv G_{\mu\nu\rho\lambda} \partial_a X^{\mu} \partial_b X^{\nu} \partial_c X^{\rho} \partial_d X^{\lambda}
    = \epsilon_{ab} \epsilon_{cd} \sigma_A^2,
\end{align}
where $\sigma^{\mu\nu}$ and $\sigma_A^2$ are respectively defined in eqs.~\eqref{sigma-def} and \eqref{sigma2-def}.
For convenience, we repeat the definitions here:
\begin{align*}
    \sigma_A^2 \equiv \frac{1}{4} G_{\mu\nu\rho\lambda} \sigma^{\mu\nu} \sigma^{\rho\lambda},
    \ \ \ \ \ 
    \sigma^{\mu\nu} \equiv \epsilon^{ab} \partial_a X^{\mu} \partial_b X^{\nu}.
\end{align*}
Since $\sigma^{\mu\nu}$ contains a single $\epsilon^{ab}$ and the areal metric $G_{\mu\nu\rho\lambda}$ is a background tensor, the quantity $\sigma_A^2$ transforms under worldsheet diffeomorphism as a scalar density of the same weight (weight $2$) as the auxiliary field $\mathcal{H}$ defined in Sec.~\ref{sec:areal_nambu_polyakov}.

Therefore, the most general VPD-invariant actions and diffeomorphism-invariant actions built from an areal metric are, respectively,
\begin{align}
    S^{A}_{\text{VPD}} &= \int d^2 \xi \,  F(\mathcal{H}, \sigma_A^2)\label{eq:action_s_vpd},
    \\
    S^{A}_{\text{Diff}} &= \int d^2 \xi \, \sqrt{\mathcal{H}} F\left(\frac{\sigma_A^2}{\mathcal{H}}\right), \label{eq:action_s_diff}
\end{align}
where $F$ is an arbitrary function.

In analogy to the ordinary case of Weyl transformation $h_{ab}\rightarrow e^\varphi h_{ab}$, we define Weyl scaling for the worldsheet areal metric $\mathcal{H}_{abcd}$ as
\begin{align}
    \mathcal{H}_{abcd}\rightarrow e^{2\varphi} \mathcal{H}_{abcd},
\end{align}
which is equivalent to $\mathcal{H} \rightarrow e^{2\varphi} \mathcal{H}$.
The factor $2$ in the exponent is chosen so that this transformation is consistent with the Weyl transformation when the worldsheet areal metric is defined by a worldsheet metric $h_{ab}$ as
\begin{align}
    \mathcal{H}_{abcd}=h_{ac}h_{bd}-h_{ad}h_{bc}.
\end{align}
Under this Weyl scaling, $\sigma_A^2$ is invariant. 
Weyl symmetry implies that the Lagrangian is independent of $\mathcal{H}$ while the Weyl symmetry acts trivially on $\sigma_A^2$ in the action.

\subsection{Diffeomorphism-Invariant Areal Actions}

We first consider the diffeomorphism-invariant action~\eqref{eq:action_s_diff}.
To eliminate $\mathcal{H}$ from eq.~\eqref{eq:action_s_diff}, we use the equation of motion for $\mathcal{H}$,
\begin{align}\label{eq:H_F}
    \frac{1}{2\sqrt{\mathcal{H}}}\left(F(y)-2y F'(y)\right) = 0,
    \qquad y \equiv \frac{\sigma_A^2}{\mathcal{H}}
\end{align}
where $F'$ denotes the derivative of $F$ with respect to its argument.

In the following, we demonstrate how the $\mathcal{H}$-dependence of eq.~\eqref{eq:action_s_diff} can be removed using eq.~\eqref{eq:H_F}. 
First, we multiply $\partial_a y$ on both sides and get
\begin{align}\label{eq:F_diff_areal}
    \frac{F^3\left(y\right)}{2\sqrt{\mathcal{H}}}\partial_a\left(\frac{y}{F^2\left(y\right)}\right) = 0.
\end{align}
For $F\left(y\right)\neq 0$, the term $\frac{y}{F^2\left(y\right)}$ must be a worldsheet constant,\footnote{There can be some $\frac{\sigma_A^2}{\mathcal{H}}=y_0$ such that $F(y_0)=0$. 
This implies $F'(y_0)=0$ from \eqref{eq:H_F}. 
Then, they ensure that the equation of motion for $X^\mu$ is satisfied automatically. 
In this case, the only equation we have is $\frac{\sigma_A^2}{\mathcal{H}}=y_0$. 
Then, any $X^\mu$ is allowed, as long as we set $\mathcal{H}$ to be $\frac{\sigma_A^2}{y_0}$, which is not a physically interesting solution.} say $C^{-2}$
Since the term being worldsheet constant $C^{-2}$ means that
\begin{align}\label{eq:h_sub_sigma}
    \sqrt{\mathcal{H}}F\left(\frac{\sigma_A^2}{\mathcal{H}}\right)= C \sqrt{\sigma_A^2},
\end{align}
we substitute \eqref{eq:h_sub_sigma} into 
\eqref{eq:action_s_diff}, and get
\begin{align}\label{eq:sadiff}
    S_{A,\text{Diff}}=C\int d^2\xi \sqrt{\sigma_A^2},
\end{align}
which is exactly the Nambu-Goto action on areal-metric manifolds~\eqref{eq:action_GNG} with tension $C$.

There are two classes of solutions, depending on whether $y$ is a constant on the worldsheet.
If $y$ is not a constant,
we can solve eq.~\eqref{eq:H_F} as a differential equation of $F$ as a function of $y$ by
\begin{align}\label{eq:FH}
   F(y) = C \sqrt{y} = C \sqrt{\frac{\sigma_A^2}{\mathcal{H}}},
\end{align}
for an arbitrary constant $C$.
The action $S^A_{\text{Diff}}$~\eqref{eq:action_s_diff} is thus 
\begin{align}
S^{A}_{\text{Diff}} &= C \int d^2 \xi \, \sqrt{\sigma_A^2},
\end{align}
which coincides with the areal Nambu-Goto action~\eqref{eq:action_GNG} with tension $C$.

The other class of solutions to eq.~\eqref{eq:H_F} corresponds to its algebraic solutions $\{y_i\}$.
Then, we can solve eq.~\eqref{eq:H_F} by setting ${\cal H} = \sigma_A^2/y_i$.
The resulting action is
\begin{align}
    S^A_{\text{Diff}}= \int d^2\xi  \, \sqrt{\mathcal{H}} F(y_i) = \frac{F(y_i)}{\sqrt{y_i}} \int d^2\xi  \, \sqrt{\sigma_A^2}.
\end{align}
This is the areal Nambu-Goto action~\eqref{eq:action_GNG} with tension $F(y_i)/\sqrt{y_i}$.
Hence, except singular cases when $F(y_i) = 0$ or $y_i = 0$, any diffeomorphism-invariant areal action of the form \eqref{eq:action_s_diff} is classically equivalent to the areal Nambu–Goto theory.

\subsection{VPD-Invariant Areal Actions}

We next study the VPD-invariant action $S^{A}_{\text{VPD}}$~\eqref{eq:action_s_vpd}.
Varying it with respect to $\mathcal{H}$ yields
\begin{align}
    \frac{\partial F(\mathcal{H}, \sigma_A^2)}{\partial \mathcal{H}} = 0.
\end{align}
This implies that $F$ must in fact be independent of $\mathcal{H}$, and thus
\begin{align}\label{eq:FHSIGMAG}
    F(\mathcal{H}, \sigma_A^2) = f(\sigma_A^2)
\end{align}
for some arbitrary function $f$.
The action \eqref{eq:action_s_vpd} therefore reduces to
\begin{align} \label{eq:AM-the Schild}
    S^A_{GS} \equiv \int d^2\xi \, f(\sigma_A^2)
\end{align}
which we call the generalized areal Schild action.

To show that $S^A_{GS}$~\eqref{eq:AM-the Schild} is classically equivalent to the areal Nambu-Goto action~\eqref{eq:action_GNG}, we vary $S^A_{GS}$ with respect to $X^\alpha$.
This gives 
\begin{align} \label{eq:SAS-eom}
    0 = \frac{1}{4} f'(\sigma_A^2) \left(\partial_{\alpha} G_{\mu\nu\rho\lambda}\right) \sigma^{\mu\nu} \sigma^{\rho\lambda} - \partial_a \left(f'(\sigma_A^2) G_{\alpha\nu\rho\lambda} \epsilon^{ab} \partial_b X^{\nu} \sigma^{\rho\lambda}\right)
\end{align}
where $f'$ denotes the derivative of $f$ with respect to its argument $\sigma_A^2$.

Multiplying eq.~\eqref{eq:SAS-eom} by $\partial_c X^{\alpha}$ and using the identity \eqref{eq:antisym_sum}, we obtain
\begin{align}
    0 &= \frac{1}{4} f'(\sigma_A^2) \left(\partial_c G_{\mu\nu\rho\lambda}\right) \sigma^{\mu\nu} \sigma^{\rho\lambda} - \epsilon^{ab} \partial_c X^{\alpha} \partial_b X^{\nu} \partial_a \left(f'(\sigma_A^2) G_{\alpha\nu\rho\lambda} \sigma^{\rho\lambda}\right) \\
    &= \frac{1}{4} f'(\sigma_A^2) \left(\partial_c G_{\mu\nu\rho\lambda}\right) \sigma^{\mu\nu} \sigma^{\rho\lambda} - \epsilon^{ab} \frac{1}{2} \epsilon_{cb} \sigma^{\alpha\nu} \partial_a \left(f'(\sigma_A^2) G_{\alpha\nu\rho\lambda} \sigma^{\rho\lambda}\right).
\end{align}
This can be rewritten as 
\begin{align}
    0&= \frac{1}{4} f'(\sigma_A^2) \left(\partial_c G_{\mu\nu\rho\lambda}\right) \sigma^{\mu\nu} \sigma^{\rho\lambda} - \frac{1}{2} \sigma^{\alpha\nu} \partial_c \left(f'(\sigma_A^2) G_{\alpha\nu\rho\lambda} \sigma^{\rho\lambda}\right) \\
    &= - \frac{1}{4f'(\sigma_A^2)} \partial_c \left({f'}^2(\sigma_A^2) \sigma_A^2\right),
\end{align}
or, equivalently,
\begin{align}\label{eq:eom-1}
    0&= \left(\partial_c \sigma_A^2\right)  \partial_{\sigma_A^2}\left({f'}^2(\sigma_A^2) \sigma_A^2\right).
\end{align}

For the case of the areal Nambu-Goto action with $f(\sigma_A^2) = T \sqrt{\sigma_A^2}$, the equation above is satisfied identically.
For other choices of $f$, the factor $\partial_{\sigma_A^2}\left(f'^2\left(\sigma_A^2\right)\sigma_A^2\right)$ does not vanish identically, so eq.~\eqref{eq:eom-1} implies
\begin{align}\label{eq:constant_gamma_4}
    \partial_c\left(\sigma_A^2\right)=0.
\end{align}
Thus, $\sigma_A^2$ must be constant on the worldsheet (see Appendix~\ref{sec:discrete_sigma} for a detailed discussion).
Consequently, $f'(\sigma_A^2)$ has to be a worldsheet constant as well.
We can therefore take $f'(\sigma_A^2)$ outside the derivative in the second term of eq.~\eqref{eq:SAS-eom}.
The resulting equation of motion has the same form as that derived from $S^A_{S}$ \eqref{eq:action_GS}, which was shown in Sec.~\ref{sec:gng=gs} to be equivalent to the areal Nambu-Goto theory.

We conclude that all areal actions are classically equivalent to the areal Nambu-Goto action.

%%%%%%%%%%%%%%%%%%%%%%%%%%%%%%%%%%%%%%%%%%%%%%%%%%%%%%%%%%%%%%%%%%%%%%%%%%%%%%%%%%%%%%%%%%%%%%%%%%%%

\section{Areal-Metric Deformation of Background Metric}
\label{sec:anomalous_polyakov}

The Polyakov action on a Riemannian background describes a critical string, but it is not clear whether this statement holds on an areal-metric manifold.

To address this question, we consider a small areal-metric deformation of an ordinary metric $g_{\mu\nu}$, parametrized by
\begin{align}
\label{eq:G=g+a}
G_{\mu\nu\rho\lambda} = g_{\mu\rho}g_{\nu\lambda} - g_{\mu\lambda}g_{\nu\rho} + a_{\mu\nu\rho\lambda},
\end{align}
where $a_{\mu\nu\rho\lambda}$ is treated as a perturbation.
The following action, proposed in ref.~\cite{Borissova:2024cpx}, generalizes the Polyakov action: 
\begin{equation}\label{eq:action_GP}
    S_{GP} 
    =\int d^2\xi \ \mathcal{L}_{GP},\ \ \ \ \ \ 
    \mathcal{L}_{GP}\equiv \sqrt{\left( \frac{\sqrt{-h}}{2} h^{ab} \gamma_{ab} \right)^2 + \frac{1}{4} a_{\mu\nu\rho\lambda} \sigma^{\mu\nu} \sigma^{\rho\lambda}},
\end{equation}
where $h_{ab}$ and $\gamma_{ab}$ are the worldsheet and induced metrics defined in Sec.~\ref{s:intro}.
The action~\eqref{eq:action_GP} is invariant under two-dimensional diffeomorphisms and Weyl scaling $h_{ab} \rightarrow e^{\varphi} h_{ab}$.

\subsection{Classical Equivalence to Areal Nambu-Goto}

We first show that the action \eqref{eq:action_GP} is classically equivalent to the areal Nambu-Goto action~\eqref{eq:action_GNG}.
Varying $S_{GP}$ with respect to $h_{ab}$ gives
\begin{align}
    0 = \frac{1}{\mathcal{L}_{GP}}\frac{1}{2}\left( -\frac{1}{2}\sqrt{-h} h^{cd} \gamma_{cd} h_{ab} + \sqrt{-h} \gamma_{ab} \right).
\end{align}
This is the same relation as in the ordinary Polyakov theory, and implies
\begin{align}
    \frac{h_{ab}}{\sqrt{-h}}=\frac{\gamma_{ab}}{\sqrt{\gamma}}.
\end{align}
Substituting this solution back to eq.~\eqref{eq:action_GP} reproduces the areal Nambu-Goto action~\eqref{eq:action_GNG}.
Thus, at the classical level, the generalized Polyakov action \eqref{eq:action_GP} and the areal Nambu-Goto theory are equivalent.

\subsection{Quantum Consistency and Primary Operator Condition}

We now investigate the quantum consistency.
In the conformal gauge $h_{ab} = \eta_{ab}$, the leading order of eq.~\eqref{eq:action_GP} reduces to the ordinary Polyakov action~\eqref{SP} of the bosonic string,
\begin{align} \label{eq:CFT0}
    S = \frac{1}{2} \int d^2\xi \, \eta^{ab} g_{\mu\nu} \partial_a X^{\mu} \partial_b X^{\nu},
\end{align}
which defines a conformal field theory.
A deformation of the action preserves the conformal symmetry at the quantum level if the corresponding operator is a primary field of weight $\left(1,1\right)$.

As pointed out in ref.~\cite{Borissova:2024cpx}, when $a_{\mu\nu\rho\lambda}$ is small, the action \eqref{eq:action_GP} can be treated as a perturbation of eq.~\eqref{eq:CFT0} by a fractional term
\begin{align} \label{operator-a}
    \frac{a_{\mu\nu\rho\lambda} \epsilon^{ab} \epsilon^{cd} \partial_a X^{\mu} \partial_b X^{\nu} \partial_c X^{\rho} \partial_d X^{\lambda}}{g_{\alpha\beta} \partial_a X^{\alpha} \partial_b X^{\beta}}
\end{align}
at the leading order of $a_{\mu\nu\rho\lambda}$.

We now test whether this operator is a $(1,1)$-primary operator on the flat background 
\begin{align}
    g_{\alpha\beta} = \eta_{\alpha\beta}.
\end{align}
Using complex coordinates $\left(z,\bar{z}\right)$ on the worldsheet, the operator~\eqref{operator-a} becomes
\begin{align} \label{eq:area_metric_operator}
    \mathcal{O} \equiv \, \normord{\frac{1}{\eta_{\alpha\beta} \partial_z X^\alpha \partial_{\bar{z}} X^\beta} a_{\mu\nu\rho\lambda} \partial_z X^\mu \partial_{\bar{z}} X^\nu \partial_z X^\rho \partial_{\bar{z}} X^\lambda}.
\end{align}

In general, consider a normal-ordered operator of the form
\begin{align}
    \mathcal{O} = \ \normord{F\left( \partial_z X^\mu \right)}    
\end{align}
and the holomorphic energy momentum tensor $T(z)$ of the free boson theory.
Their OPE takes the form 
\begin{align*}
    T(z) \mathcal{O}(w) = \frac{-1}{2} \frac{1}{\left( z-w \right)^4} \normord{\frac{\partial^2 F}{\partial\left(\partial_w X^\mu\right) \partial\left(\partial_w X_\mu\right)} }- \frac{-1}{\left( z-w \right)^2} \normord{\partial_w X^{\mu} \frac{\partial F}{\partial\left(\partial_w X^\mu\right)}} + \dots
\end{align*}
where $F=F\left( \partial_w X^\lambda \right)$, and the ellipsis are terms that are regular when $z\rightarrow w$.
For $\mathcal{O}$ to be a primary with conformal dimension $\Delta$, we must have
\begin{align} 
    \eta_{\mu\nu} \frac{\partial^2 F}{\partial\left(\partial_z X_\mu\right) \partial\left(\partial_z X_\nu\right)} = 0 \label{eq:primary_constraint_1},
    \\
    \partial_z X^{\mu} \frac{\partial F}{\partial\left(\partial_z X^\mu\right)} = \Delta F. \label{eq:primary_constraint_2}
\end{align}

In our case, we write these conditions in terms of 
\begin{align}
    u^\mu=\partial_zX^\mu, \ \ \ v^\mu=\partial_{\bar{z}} X^\mu,
\end{align}
and define 
\begin{align}\label{FKuu/uv}
    F(u,v) \equiv \frac{K_{\mu\nu} u^\mu u^\nu}{u\cdot v},
\end{align}
where
\begin{align}
    K_{\mu\nu} \equiv a_{\mu\rho\nu\lambda} v^\rho v^\lambda, \ \ u\cdot v \equiv \eta_{\mu\nu} u^\mu v^\nu.
\end{align}

The condition \eqref{eq:primary_constraint_2} is easy to check by counting the power of $u^\mu$ in eq.~\eqref{FKuu/uv}.
That is, there is one $u^\mu$ in the denominator and two in the numerator, so $\Delta = 1$.

The nontrivial constraint is condition \eqref{eq:primary_constraint_1}, which is equivalent to
\begin{align}\label{eq:primary_constraint_11}
    0 = \eta^{\mu\nu} \frac{\partial^2 F}{\partial u^{\mu} \partial u^{\nu}}.
\end{align}
A straightforward calculation gives
\begin{align}\label{eq:eta_f_uu}
    \eta^{\mu\nu} \frac{\partial^2 F(u,v)}{\partial u^{\mu} \partial u^{\nu}}
    &= \frac{\eta^{\mu\nu} K_{\mu\nu}}{u \cdot v} - 2 \frac{K_{\mu\nu} u^\mu v^\nu}{\left(u \cdot v\right)^2} + \frac{K_{\mu\nu} u^\mu u^\nu \eta_{\alpha\beta} v^\alpha v^\beta}{\left( u \cdot v \right)^3}.
\end{align}
Since $a_{\mu\nu\rho\lambda}=-a_{\nu\mu\rho\lambda}$, we have
\begin{align}
    K_{\mu\nu}v^\mu=a_{\mu\rho\nu\lambda}v^\mu v^\rho v^\lambda=0,
\end{align}
and eq.~\eqref{eq:eta_f_uu} simplifies to
\begin{align}
    \eta^{\mu\nu} \frac{\partial^2 F(u,v)}{\partial u^{\mu} \partial u^{\nu}}
    &= \frac{\eta^{\mu\nu} K_{\mu\nu}}{u \cdot v}  + \frac{K_{\mu\nu} u^\mu u^\nu \eta_{\alpha\beta} v^\alpha v^\beta}{\left( u \cdot v \right)^3}.
\end{align}
Therefore, the condition~\eqref{eq:primary_constraint_11} is equivalent to
\begin{align}\label{eq:G_primary_constraint}
    a_{\mu\rho\nu\lambda} \eta^{\mu\nu} v^\rho v^\lambda \left( u \cdot v \right)^2 + a_{\mu\rho\nu\lambda} u^\mu u^\nu v^\rho v^\lambda v^2 = 0.
\end{align}
We now show that the only solution to eq.~\eqref{eq:G_primary_constraint} is the trivial one $a_{\mu\nu\rho\lambda}=0$.

Define a Gaussian average of a function $A$ by
\begin{align}
    \bra A \ket \equiv \int d^D u \left( A \ e^{\frac{-1}{2}\eta_{\mu\nu} u^\mu u^\nu} \right).
\end{align}
Taking the average of eq.~\eqref{eq:G_primary_constraint} over $u$, all explicit dependence on $u^\mu$ can be expressed in terms of $\bra u^\mu u^\nu \ket \propto \eta^{\mu\nu}$.
The averaged equation becomes
\begin{align}
    a_{\mu\rho\nu\lambda} \eta^{\mu\nu} v^\rho v^\lambda v^2 + a_{\mu\rho\nu\lambda} \eta^{\mu\nu} v^\rho v^\lambda v^2 = 2 a_{\mu\rho\nu\lambda} \eta^{\mu\nu} v^\rho v^\lambda v^2 = 0.
\end{align}
Therefore, $a_{\mu\rho\nu\lambda}$ must vanish.
There is no nontrivial areal-metric deformation \eqref{operator-a} that preserves the conformal symmetry by itself.

We emphasize that this argument only applies in the perturbative regime around the flat background in the absence of other background fields.

%%%%%%%%%%%%%%%%%%%%%%%%%%%%%%%%%%%%%%%%%%%%%%%%%%%%%%%%%%%%%%%%%%%%%%%
%%%%%%%%%%%%%%%%%%%%%%%%%%%
\section{Higher Dimensional Extended Objects}
\label{sec:higher_dim}

So far we have focused on string theories, which have two-dimensional worldsheets, and established their classical equivalence on both Riemannian and areal-metric manifolds.
In this section, we extend the analysis to higher-dimensional extended objects and introduce a generalization of an areal metric, namely, a volume metric.

For higher-dimensional extended objects in a Riemannian spacetime, the Nambu-Goto action is naturally generalized to
\begin{align}\label{eq:action_ng_volume}
    \mathcal{L}_{NG}=T\sqrt\gamma,
\end{align}
where $\gamma$ is the determinant of the induced metric $\gamma_{ab}$ on $d$-dimensional worldvolume.
More explicitly, we have
\begin{equation}\label{eq:gggggg}
\begin{split}
    \gamma 
    &=\frac{1}{d!} \epsilon^{a_1a_2\dots a_d}\epsilon^{b_1b_2\dots b_d}\gamma_{a_1b_1}\gamma_{a_2b_2}\dots\gamma_{a_db_d}\\
    &= \frac{1}{d!} g_{\mu_1\nu_1}g_{\mu_2\nu_2}\dots g_{\mu_d\nu_d}\sigma^{\mu_1\mu_2\dots \mu_d}\sigma^{\nu_1\nu_2\dots \nu_d},
\end{split}    
\end{equation}
where $\gamma_{ab}$ is defined by
\begin{align}
    \gamma_{ab} \equiv g_{\mu\nu}(X) \partial_a X^\mu \partial_b X^\nu 
    \qquad (a, b = 0, 1, \cdots, d-1),
    \label{eq:gamma-1}
\end{align} 
and
\begin{align}\label{sigma-def-v}
    \sigma^{\mu_1\mu_2\dots\mu_d}
    &\equiv \epsilon^{a_1a_2\dots a_d}\partial_{a_1}X^{\mu_1}\partial_{a_2}X^{\mu_2}\dots \partial_{a_d}X^{\mu_d},
\end{align}
which can be expressed as the $d$-dimensional Nambu bracket $\left\{X^{\mu_1},X^{\mu_2},\dots,X^{\mu_d}\right\}$.
The Lagrangian~\eqref{eq:action_ng_volume} can thus be rewritten as
\begin{align}\label{eq:nambu_goto_d_volume}
    \mathcal{L}_{NG}=T\sqrt{\left\{X^{\mu_1},\dots,X^{\mu_d}\right\}^2}.
\end{align}
We define the Schild Lagrangian for higher-dimensional objects as
\begin{align}
    \mathcal{L}_{S}=a{\left\{X^{\mu_1},\dots,X^{\mu_d}\right\}^2}.
\end{align}
We shall consider in this section these actions on Riemannian manifolds as well as their generalizations to manifolds equipped with volume metrics.

Similar to the definition of the areal metric~\eqref{eq:areal_metric_def}, we define the volume metric through the volume element $d(vol)$ as
\begin{align}
    d(vol)^2 \equiv V_{\left[\mu_1\mu_2\dots\mu_d\right]\left[\nu_1\nu_2\dots\nu_d\right]} \left(dX^{\mu_1}\wedge dX^{\mu_2}\wedge \dots \wedge dX^{\mu_d}\right) \otimes_{sym} \left(dX^{\nu_1}\wedge dX^{\nu_2}\wedge \dots \wedge dX^{\nu_d}\right),
\end{align}
where $V_{\left[\mu_1\mu_2\dots\mu_d\right]\left[\nu_1\nu_2\dots\nu_d\right]}$ is the volume metric.

We will extend our discussion of the worldsheet metric in string actions to higher-dimensional extended objects on Riemannian manifolds in Sec.~\ref{sec:worldvolume_action}.
In Sec.~\ref{sec:volumemetric_generalization}, we define the volume-metric Nambu-Goto action and the volume-metric Schild actions, and discuss how they are equivalent.
In Sec.~\ref{sec:VPD_theorem}, we prove a theorem about the VPD symmetry, give some examples, and then discuss a common feature of the generalizations of the Nambu-Goto action.

\subsection{Worldvolume Metric for Higher Dimensional Extended Objects on Riemannian Manifolds} \label{sec:worldvolume_action}

We now extend the discussion in Sec.~\ref{sec:general_worldsheet_actions} to $d$-dimensional worldvolumes.
As in Sec.~\ref{sec:general_worldsheet_actions}, we denote $\Gamma$ and $H$ to be the matrix form of induced metric $\gamma_{ab}$ and worldvolume metric $h_{ab}$, and they are now of size $d\times d$.
The VPD-invariant Lagrangian density has the form
\begin{align}\label{eq:generalL}
    \mathcal{L}=f\left( \sqrt{\gamma}, a_1, \dots,a_d \right)
\end{align}
where $a_1,\dots,a_d$ are the invariants built from the matrix $\Gamma H^{-1}$, which are defined as $a_k=\text{tr}\left(\left(\Gamma H^{-1}\right)^k\right)$ with $k=1,\dots,d$.

Using the identity
\begin{align}
    \frac{\partial a_k}{\partial h^{ij}} = k \ \gamma_{ja_1}h^{a_1b_2}\ \gamma_{b_2a_2}h^{a_2b_3} \dots \gamma_{b_ki}
\end{align}
where $h^{ab}$ is the inverse of $h_{ab}$, the equation of motion for $h^{ij}$ reads
\begin{align}
    \frac{\partial \mathcal{L}}{\partial h^{ij}}=\sum_{k=1}^d k \, f_k \, \gamma_{ja_1}h^{a_1b_2}\, \gamma_{b_2a_2}h^{a_2b_3} \dots \gamma_{b_ki}\equiv 0
\end{align}
where $f_k\equiv \partial f/\partial a_k$.
After multiplying $h^{i\, b_{k+1}}$ on both sides of the last equality, we get
\begin{align}\label{eq:gen_case}
    0=\sum_{k=1}^d k \, f_k \, A^k,
\end{align}
where $A=\Gamma H^{-1}$.
Since the solution to this equation depends on the exact form of $f$, we cannot solve it explicitly.
However, physically speaking, since the induced metric should be full rank for extended objects, the matrix $A$ is invertible generically.
This implies three things.

\begin{enumerate}
    \item In the case of $d=2$, after multiplying $A^{-1}$, eq.~\eqref{eq:gen_case} becomes
    \begin{align}
        0=f_1\, I+2f_2A,
    \end{align}
    where $I$ is the two-dimensional identity.
    This equation implies $A=\frac{-f_1}{2f_2}I$, meaning that the eigenvalues of $A$ must be identical.
    Also, this expression aligns with the result in Sec.~\ref{sec:general_worldsheet_actions}, see eq.~\eqref{eq:h_eom}, and all the discussions in previous sections apply.

    \item The case of $d\geq3$ is totally different from that of $d=2$.
    In such a case, we have more than two terms in the equation, and multiplying $A^{-1}$ does not help to solve the equation.
    Instead, if we assume that the matrix $A$ is diagonalizable, meaning
    \begin{align}
        A=P\Lambda P^{-1},
    \end{align}
    where $\Lambda$ is a diagonal matrix and $P$ is some invertible matrix.
    The $d$ eigenvalues $\lambda_i$ $(i=1,\dots,d)$ in $\Lambda$ satisfy eq.~\eqref{eq:gen_case}, that is,
    \begin{align}\label{eq:char_eq_gen}
        0=\sum_{k=1}^d k\,f_k\left(\sqrt{\gamma},a_1\left(\lambda_1,\dots,\lambda_d\right),\dots,a_d\left(\lambda_1,\dots,\lambda_d\right)\right)\,\lambda_i^k,
    \end{align}
    which is a system of $d$ coupled equations of $\lambda_i$'s. 
    Generically, we have a discrete solution set for the system $S=\left\{\left(\lambda_1^{(\alpha)}\left(\sqrt{\gamma}\right),\dots,\lambda_d^{(\alpha)}\left(\sqrt{\gamma}\right)\right)\right\}$ where $\alpha$ labels the solutions.
    When we substitute them into our Lagrangian \eqref{eq:generalL}, we can have multiple possible Lagrangians 
    \begin{align}
        \mathcal{L}^{(\alpha)} = f\left(\sqrt{\gamma},a_1^{(\alpha)}\left(\sqrt{\gamma}\right),\dots,a_d^{(\alpha)}\left(\sqrt{\gamma}\right)\right),
    \end{align}
    where $a_n^{(\alpha)}\left(\sqrt{\gamma}\right)=\sum_{k=1}^d ({\lambda_k^{(\alpha)}}\left(\sqrt{\gamma}\right))^n$ for $n=1,\dots,d$.
    For any $\alpha$, the resulting Lagrangians are functions of $\sqrt{\gamma}$.
\end{enumerate}

At any rate, after eliminating the auxiliary worldvolume metric $h_{ab}$, the resulting Lagrangian is some function of $\sqrt{\gamma}$.
In Sec.~\ref{sec:equiv_gen_the_Schild_ng}, we will show that all of them are equivalent to the Nambu-Goto action with a dynamically determined tension.
Before that, we generalize the discussion above to manifolds equipped with volume metric in the next section.

\subsection{Higher Dimensional Extended Objects on Volume-Metric Manifolds}\label{sec:volumemetric_generalization}

As in the areal metric case, we can generalize $g_{\mu_1\nu_1}g_{\mu_2\nu_2}\dots g_{\mu_d\nu_d}$ in eq.~\eqref{eq:gggggg} to the $d$-dimensional volume metric $V_{[\mu_1\mu_2\dots\mu_d][\nu_1\nu_2\dots\nu_d]}$, where indices in the bracket $[\cdots]$ are totally anti-symmetric among themselves, and the exchange of the two $d$-tuples $([\mu_1\mu_2\dots\mu_d] \leftrightarrow [\nu_1\nu_2\dots\nu_d])$.
Then, we have the Nambu-Goto action on a volume-metric background as 
\begin{align}
    S^V_{NG}=T\int d^d\xi\sqrt{\gamma_v},
    \label{SVNG}
\end{align}
with
\begin{align}\label{eq:gamma_v}
    \gamma_v \equiv V_{[\mu_1\mu_2\dots\mu_d][\nu_1\nu_2\dots\nu_d]}\sigma^{\mu_1\mu_2\dots\mu_d}\sigma^{\nu_1\nu_2\dots\nu_d}\equiv{\sigma_v}^2
\end{align}
and $\sigma^{\mu_1\mu_2\dots\mu_d}$ is defined by eq.~\eqref{sigma-def-v}.
The counterpart of the generalized Schild action on a volume-metric background is given by
\begin{align} \label{eq:sigma_volume_gen}
    \mathcal{L}^V_{GS}=F(\gamma_v)
\end{align}
with an arbitrary function $F$.

Below, we show its equivalence to the volume-metric Schild theory
\begin{align}
    \mathcal{L}^V_{S} \equiv \gamma_v = \sigma_v^2.
\end{align}
The equation of motion with respect to $X^{\mu_1}$ for eq.~\eqref{eq:sigma_volume_gen} is
\begin{equation}
\begin{split}
    0
    =&\epsilon^{a_1a_2\dots a_d}\partial_{a_1}\left\{\left(\partial_{a_2}X^{\mu_2}\dots\partial_{a_d}X^{\mu_d}\right) F'\left(\gamma_v\right)\sigma_{\mu_1\mu_2\dots\mu_d}\right\}\\
    =&\epsilon^{a_1a_2\dots a_d}\left(\partial_{a_2}X^{\mu_2}\dots\partial_{a_d}X^{\mu_d}\right)\partial_{a_1}\left\{ F'\left(\gamma_v\right)\sigma_{\mu_1\mu_2\dots\mu_d}\right\}\\
    =&\epsilon^{a_1a_2\dots a_d}\left(\partial_{a_2}X^{\mu_2}\dots\partial_{a_d}X^{\mu_d}\right) \times
    \\
    &\times \left[
    \left(\partial_{a_1}{\sigma_v}^2\right)F''\left(\gamma_v\right)\sigma_{\mu_1\mu_2\dots\mu_d}+F'\left(\gamma_v\right)\left(\partial_{a_1}\sigma_{\mu_1\mu_2\dots\mu_d}\right)
    \right]
\label{eq:eom_volume}
\end{split}
\end{equation}
where $F'$ and $F''$ denote derivatives with respect to $\sigma_v^2$.

Multiplying eq.~\eqref{eq:eom_volume} by $\partial_bX^{\mu_1}$ and using the $d$-dimensional analogue of the identity~\eqref{eq:antisym_sum}
\begin{align}
    \partial_{a_1}X^{[\mu_1}\partial_{a_2}X^{\mu_2}\dots \partial_{a_d}X^{\mu_d]}=\epsilon_{a_1a_2\dots a_d}\sigma^{\mu_1\mu_2\dots\mu_d},
\end{align}
where the indices $\mu_1, \cdots, \mu_d$ on the left-hand side are totally anti-symmetrized, we obtain
\begin{align}\label{eq:gen_eom_vv}
    0=\left(\partial_b\gamma_v\right)\left(2\gamma_v F''\left(\gamma_v\right)+F'\left(\gamma_v\right)\right).
\end{align}

For the volume-metric Nambu-Goto action~\eqref{SVNG}, $F(\gamma_v) = T \sqrt{\gamma_v}$, the equation above is trivially satisfied for any $\gamma_v$ because $\left(2\gamma_v F''\left(\gamma_v\right)+F'\left(\gamma_v\right)\right) = 0$ identically.
(The equivalence of the Nambu-Goto action with the Schild action will be proved in Sec.~\ref{sec:equiv_gen_the_Schild_ng}.)

For a generic $F$, since $\left(2\gamma_v F''\left(\gamma_v\right)+F'\left(\gamma_v\right)\right)$ does not vanish identically, the equation of motion implies
\begin{align}\label{eq:sv2=c}
\left(\partial_b\gamma_v\right) =
\partial_b\left(\sigma_v^2\right) = 0.
\end{align}
(See Appendix~\ref{sec:discrete_sigma} for a detailed proof.) 
Thus $\gamma_v = \sigma_v^2$ is constant on the worldvolume, meaning that $F'(\gamma_v)$ is also a constant on the worldvolume.
This allows us to extract $F'\left(\gamma_v\right)$ from the first line of eq.~\eqref{eq:eom_volume}.
It becomes
\begin{align}
    0 = \epsilon^{a_1a_2\dots a_d} \, \partial_{a_1} \left\{\left(\partial_{a_2}X^{\mu_2}\dots\partial_{a_d}X^{\mu_d}\right) \sigma_{\mu_1\mu_2\dots\mu_d}\right\},
\end{align}
which is nothing but the equation of motion of the volume-metric Schild theory as long as $F$ is not independent of $\gamma_v$.

Here, we have proved the equivalence between the volume-metric generalized Schild action and the volume-metric Schild action. 
This contains the Riemannian manifolds as special cases, since the line metric is a special case of the volume metric.
In Sec.~\ref{sec:equiv_gen_the_Schild_ng}, we provide a proof to show the equivalence between the (volume-metric) generalized Schild action and the (volume-metric) Nambu-Goto action.

\subsection{A Theorem on VPD symmetry}\label{sec:VPD_theorem}

\begin{theorem}
\label{Thm}
For an arbitrary action of the form
\begin{align}
    S=S[\phi; \rho],
\end{align} 
where $\phi$ represents any collection of dynamical scalar, vector, or tensor fields, while $\rho$ is a background scalar density field,
if the action is invariant under simultaneous diffeomorphism transformations of both $\phi$ and $\rho$,
the quantity $\frac{\delta S}{\delta \rho}$ is a spacetime constant when $\phi$ satisfies the equations of motion.
\end{theorem}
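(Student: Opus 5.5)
The plan is a Noether-type argument: take the diffeomorphism invariance of $S[\phi;\rho]$ under an infinitesimal vector field $\xi^a(x)$ and combine it with the equations of motion $\delta S/\delta\phi=0$. Under an infinitesimal diffeomorphism generated by $\xi^a$, every field changes by its Lie derivative. For the dynamical fields this is $\delta_\xi\phi=\mathcal{L}_\xi\phi$. For the background scalar density, taken of weight one (the case relevant here, e.g.\ $\sqrt{\gamma}$ or the worldvolume measure), it is
\begin{align}
\delta_\xi\rho=\mathcal{L}_\xi\rho=\xi^a\partial_a\rho+\rho\,\partial_a\xi^a=\partial_a(\rho\,\xi^a).
\end{align}
Invariance of the action under the simultaneous transformation of $\phi$ and $\rho$ gives, for arbitrary compactly supported $\xi^a$,
\begin{align}
0=\delta_\xi S=\int d^d x\left[\frac{\delta S}{\delta\phi}\,\mathcal{L}_\xi\phi+\frac{\delta S}{\delta\rho}\,\partial_a(\rho\,\xi^a)\right].
\end{align}

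Next I impose the equations of motion. On-shell the first term vanishes identically, whatever form $\mathcal{L}_\xi\phi$ takes for the various tensor types; this is why the theorem need not specify the tensor structure of $\phi$. Integrating the remaining term by parts, with $\xi$ of compact support so that no boundary term appears, yields
\begin{align}
0=-\int d^d x\;\rho\,\xi^a\,\partial_a\!\left(\frac{\delta S}{\delta\rho}\right).
\end{align}
Since $\xi^a$ is arbitrary, this gives $\rho\,\partial_a(\delta S/\delta\rho)=0$ pointwise. Wherever $\rho\neq0$, which holds for a nondegenerate background measure, $\partial_a(\delta S/\delta\rho)=0$. So $\delta S/\delta\rho$ is constant on each connected component.

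I also need to check that $\delta S/\delta\rho$ is a genuine scalar, so that "constant" is a coordinate-independent statement. Because $\rho$ has density weight one and $S$ is invariant, the pairing $\int (\delta S/\delta\rho)\,\delta\rho$ forces $\delta S/\delta\rho$ to have weight zero. If $\rho$ had a general weight $w$, then $\mathcal{L}_\xi\rho=\xi^a\partial_a\rho+w\rho\,\partial_a\xi^a$. The same steps then give $\partial_a\bigl(\rho^{w}\,\delta S/\delta\rho\bigr)$ up to normalization, i.e.\ the constant is the weight-zero combination $\rho\,\delta S/\delta\rho/w$. I would state the weight-one case and remark on this generalization.

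The main subtlety, rather than an obstacle, is making sure the variation of $\rho$ is a total derivative. That is exactly what lets the $\rho$ contribution be integrated by parts into $\xi^a\partial_a(\cdot)$, and it is a statement about scalar densities. A second point concerns how the theorem is meant to be applied. If $\rho$ is held fixed, only VPD of $\phi$ survives. Formally promoting $\rho$ to transform restores full diffeomorphism invariance, and the conclusion then recovers, for example, the statements $\partial_c\sigma^2=0$ and $\partial_b\gamma_v=0$ proved earlier for the generalized Schild actions.
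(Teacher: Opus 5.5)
Your proof of the theorem as stated is correct and is essentially the paper's argument: use $\delta_v\rho=\partial_\mu(\rho v^\mu)$, drop the $\delta S/\delta\phi$ term on shell, integrate by parts, and conclude $\rho\,\partial_\mu(\delta S/\delta\rho)=0$; your explicit caveats that $\rho\neq0$ and that the constant holds per connected component are left implicit in the paper. Only your side remark on general weight $w$ is off: with $E=\delta S/\delta\rho$ the same steps give $(1-w)\,E\,\partial_a\rho=w\,\rho\,\partial_a E$, so the constant weight-zero combination is $\rho^{(w-1)/w}E$, not $\rho E/w$. The latter has weight one and, at $w=1$, would contradict your own conclusion that $E$ is constant. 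This does not affect the weight-one statement you are proving.
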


\begin{proof}
The proof goes as follows.
By definition, the scalar density $\rho$ transforms as 
\begin{align}
\delta_v \rho = \partial_\mu \left( \rho v^\mu \right)
\end{align}
under a diffeomorphism $v^{\mu}$.
On the other hand, the action $S$ transforms as
\begin{align}
    0 = \int d^D x\left( \frac{\delta S}{\delta \phi\left(x\right)} \delta_v \phi\left(x\right)+\frac{\delta S}{\delta \rho\left(x\right)}\partial_\mu \left( v^\mu \rho \right) \right).
\end{align}
By taking $\phi\left(x\right)$ on shell, the first term vanishes.
Via integration by parts, the 2nd term becomes
\begin{align}
    0 = \int d^D x \left( - \partial_\mu\left( \frac{\delta S}{\delta \rho(x)} \right) v^\mu \rho \right),
\end{align}
which implies $\partial_\mu\left( \frac{\delta S}{\delta \rho\left(x\right)} \right) = 0$, and therefore $\frac{\delta S}{\delta \rho\left(x\right)}$ is a spacetime constant. 
\end{proof}

\subsubsection{Example: Unimodular Gravity and Cosmological Constant}
\label{sec:ex1}

It is well known that the cosmological constant in unimodular gravity arises as an initial condition of the dynamical equations of motion~\cite{Anderson:1971pn,Henneaux:1989zc}.
This can be viewed as a result of the application of Theorem~\ref{Thm}.

Consider the action
\begin{align}
    S[\phi; \rho] = \int d^Dx\left(\sqrt{g}R + \lambda \left(\sqrt{g}-\rho\right)\right),
\end{align}
where $\phi=\{g_{\mu\nu}, \lambda\}$ are dynamical fields and $\rho$ is the background scalar density field.
It becomes the action for unimodular gravity when the auxiliary field $\lambda$ is integrated out.

By definition, both the metric $g_{\mu\nu}$ and the scalar $\lambda$ are dynamical fields.
In particular, the field $\lambda$ can be any spacetime function.
The field $\rho$ is, on the other hand, a scalar density of a given background.

Theorem \ref{Thm} states that, since $\delta S/\delta \rho = - \lambda$, $\lambda$ is forced to be a constant, i.e.
\begin{align}
    \lambda = C,
\end{align}
when the metric $g_{\mu\nu}$ satisfies the equations of motion.
This makes $\lambda$ a cosmological constant, although its value is to be determined by the initial condition.

This example suggests that the cosmological constant may arise as a result of the equations of motion when the theory respects VPD; we do not need to put in by hand a constant in the theory from the beginning.

\subsubsection{Universal Proof for Constant Area/Volume Density}
\label{sec:equiv_gen_the_Schild_ng}

From our experience in previous examples, we notice that the key to showing the equivalence between the generalized Schild and Nambu-Goto actions is the constant area or volume density $\sigma^2$, e.g. \eqref{eq:sigma_constant}, \eqref{eq:s2=c}, \eqref{eq:areal_G_sigma_sigma}, \eqref{eq:constant_gamma_4} and \eqref{eq:sv2=c}.
Below, we provide a universal proof using Theorem~\ref{Thm} to reproduce these results.
This applies to extended objects of any dimensions on manifolds with any metrics.

Starting with a general form of action on a $d$-dimensional worldvolume,
\begin{align}\label{eq:vpd_action_density}
    \mathcal{S}=\int d^d\xi \left( \rho F\left(\rho^{-1} \sqrt\gamma\right) \right)
\end{align}
where $\rho$ is the background scalar density, and $\sqrt\gamma$ is the induced volume form for arbitrary generalized metrics.
Theorem~\ref{Thm} demands
\begin{align}\label{eq:rho_arbitrary_F}
    C= F\left(\rho^{-1}\sqrt\gamma\right)-\frac{\sqrt\gamma}{\rho}F'\left(\rho^{-1}\sqrt\gamma\right)
\end{align}
where $C$ is some worldvolume constant.\footnote{Since theorem \ref{Thm} holds on worldvolumes as well, we choose to make $C$ worldvolume constant instead of spacetime constant.}
Upon fixing the background $\rho$ to $1$, the equation becomes\footnote{{This is the same procedure as in the ordinary Polyakov theory \cite{Becker:2006dvp}.}
That is, after gauge fixing the auxiliary worldsheet metric $h_{ab}$, the equation of motion for $h_{ab}$ becomes the Virasoro constraint, and the action becomes free bosons on two-dimensional flat space.}
\begin{align}\label{eq:rho1F}
    C=F\left(\sqrt\gamma\right)-\sqrt\gamma F'\left(\sqrt\gamma\right).
\end{align}
% To solve this equation, we define $G\left(\sqrt{\gamma}\right) \equiv F\left(\sqrt{\gamma}\right)-C$.
% Then the equation above becomes
% \begin{align}
%     \frac{\partial}{\partial \sqrt{\gamma}}\left(\frac{G\left(\sqrt{\gamma}\right)}{\sqrt{\gamma}}\right)=0,
%     \label{eq:diff_xi_a_G}
% \end{align}
% implying that $G\left(\sqrt{\gamma}\right)/\sqrt{\gamma}\equiv A$ is a worldvolume constant.
% After substituting this form back to eq.~\eqref{eq:vpd_action_density} ($\rho = 1$ by gauge fixing), we get
% \begin{align}\label{eq:vpd_to_ng}
%     \mathcal{S}=\int d^d\xi\left(C+A\sqrt\gamma \right).
% \end{align}
% Since the first term is merely a constant, we can simply ignore it. 
% The resulting action is the Nambu-Goto action, and it is clear that the worldvolume constant $A$ is the tension on the worldvolume.
For generic $F$, \eqref{eq:rho1F} implies that $\sqrt\gamma$ is a constant on the worldvolume.
Therefore, the Nambu-Goto and the Schild actions\footnote{For $d\geq 3$, the Schild action is defined by the Nambu bracket squared.} are equivalent, and the constant tension is dynamically determined.
Note that this result holds not only for the Riemannian metric, but also for the areal metric and volume metrics.

%%%%%%%%%%%%%%%%%%%%%%%%%%%%%%%%%%%%%%%%%%%%%%%%%%%%%%%%%%%%%%%%%%%%%%%%%%%%%%%%%%%%%%%%%%%%%%%%%%%%

\section{Conclusion} \label{sec:conclusion}

In the first part of the paper (Secs.~\ref{Riemann} and~\ref{sec:the Nambu-Goto_the_Schild_equiv}), we have studied general string actions on Riemannian manifolds that are functions of worldsheet metric $h_{ab}$ and induced metric $\gamma_{ab}$.
This includes three well-known string actions: the Schild, the Nambu-Goto, and the Polyakov actions.
They have different symmetries: VPD, Diff, and Diff-Weyl, respectively.
Hence, we consider the general forms of actions that respect these symmetries and show their classical equivalence. 
Also, we discuss how tension can be defined from the Schild action by matching its stress-energy tensor to the Nambu-Goto's.

In the second part (Secs.~\ref{sec:algebraic_areal_bosonic} --~\ref{sec:anomalous_polyakov}), we have extended the analysis from Riemannian manifolds to areal-metric manifolds.
For VPD and diffeomorphisms, we have constructed the corresponding areal-metric analogues of the Schild action and the Nambu-Goto action\footnote{Note that the only action that is diffeomorphism and Weyl invariant is the areal Nambu-Goto action~\eqref{eq:action_GNG} when there is only auxiliary worldsheet areal metric $\mathcal{H}_{abcd}$ but no auxiliary worldsheet line metric $h_{ab}$.} and have shown that they are classically equivalent.
We also have shown that the areal-metric type perturbation around the ordinary Polyakov action cannot describe critical strings. 

In Sec.~\ref{sec:higher_dim}, we have generalized the discussion to $d$-dimensional worldvolumes on manifolds with a volume metric, the higher-dimensional analogue of an areal metric.
Using a general theorem on VPD in Sec.~\ref{sec:VPD_theorem}, we show that on $d$-dimensional Riemannian and volume-metric manifolds, the Schild actions are classically equivalent to the Nambu-Goto action with the tension determined dynamically as an integration constant. 
As long as we regard our universe as a single connected worldvolume, this implies that the Schild theories are classically equivalent to the Nambu-Goto theory in any dimension and on any Riemannian and volume-metric backgrounds.

Through our argument, there are only two natural directions for genuine generalizations.
First, one considers actions with explicit derivatives of $h_{ab}$ and $\gamma_{ab}$.
In such case, the analysis above can be regarded as their low-energy limit.
We did not discuss this case here, but we expect they will provide a nontrivial modification to the theory.
Second, as we discussed in Sec.~\ref{sec:anomalous_polyakov}, when the (line) metric is generalized to an areal metric, the modification is nontrivial in that the modified theory can no longer describe critical strings unless further modifications are introduced.
Hence, it would be interesting to investigate systematically whether consistent (possibly non-perturbative) string or brane theories can be defined on such generalized metric backgrounds, e.g., through AdS/CFT correspondence~\cite{Bhattacharya:2025hag}.

%%%%%%%%%%%%%%%%%%%%%%%%%%%%%%%%%%%%%%%%%%%%%%%%%%%%%%%%%%%%%%%%%%%%%%%
\section*{ACKNOWLEDGMENTS}
P. M. H. is supported in part by the National Science and Technology Council, R.O.C. (NSTC 113-2112-M-002 -040 -MY2), and by National Taiwan University.
H.K. is partially supported by JSPS (Grants-in-Aid for Scientific Research Grants No. 20K03970), and by National Taiwan University.
H.K. also thanks Prof. Shin-Nan Yang and his family for their kind support through the Chin-Yu chair professorship.
H. L. is supported in part by the Ministry of Science and Technology grant (112-2112-M-002-024-MY3).

\appendix

\section{Details on solving $\partial_a\left(\sigma^2\right)\left(2\sigma^2 F''(\sigma^2)+F'(\sigma^2)\right)=0$}\label{sec:discrete_sigma}

In this appendix, we discuss in detail how one solves equations of the type
\begin{align}
    \partial_a\left(\sigma^2\right)\left(2\sigma^2 F''(\sigma^2)+F'(\sigma^2)\right)=0
\end{align}
where $\sigma^{2}$ is a function on the worldsheet/worldvolume and $\partial_{a}$ is the derivative on it.
% where $\partial_a$ is the derivative on the worldsheet/worldvolume\footnote{Here, we do not specifically distinguish worldsheet and worldvolume, and below we call both of them worldvolume for simplicity.}, and $\sigma^2$ is defined over manifolds with any type of metric, such as a line metric $g_{\mu\nu}$, an areal metric $G_{\mu\nu\rho\lambda}$ or a volume metric $V_{\mu_1\dots\mu_d,\nu_1\dots\nu_d}$ with $d$ matching the dimension of the manifold.

% For generic $F$, indeed
Generically $\partial_a\sigma^2$ has to vanish. 
However, there may be specific points on the worldvolume on which $2\sigma^2 F''(\sigma^2)+F'(\sigma^2)$ vanishes.
We assume that the solution set $S$ for $\sigma^2$ to $2\sigma^2 F''(\sigma^2)+F'(\sigma^2)=0$ is discrete.
Suppose that there is a point $\xi_*^a$ on the worldvolume that simultaneously satisfies 
\begin{align}
    \partial_a\sigma^2&\neq 0, \\
    \sigma^2&=\sigma^2_*\in S.
\end{align}
Assuming that $\sigma^2$ is smooth function around a sufficiently small neighborhood of $\xi_*^a$, these conditions still remain.
However, this means $\partial_a\sigma^2=0$ around the point $\xi_*^a$.
Hence, when the set of solutions of $2\sigma^2 F''(\sigma^2)+F'(\sigma^2)=0$ is discrete, $\partial_a\sigma^2$ must vanish.

%%%%%%%%%%%%%%%%%%%%%%%%%%%%%%%%%%%%%%%%%%%%%%%%%%%%%%%%%%%%%%%%%%%%%%%%%

\end{document}